\newtheoremstyle{obs}% name
  {3pt}%      Space above
  {3pt}%      Space below
  {}%         Body font
  {}%         Indent amount (empty = no indent, \parindent = para indent)
  {\bfseries}% Thm head font
  {.}%        Punctuation after thm head
  {.5em}%     Space after thm head: " " = normal interword space;
\theoremstyle{obs}
\newtheorem{remark}[theorem]{Remark}
\newtheorem{example}[theorem]{Example}
\newcommand{\F}{\mathcal{F}}
\newcommand{\B}{\mathcal{B}}
\newcommand{\R}{\mathbb{R}}
\newcommand{\D}{\mathcal{D}}
\DeclareMathOperator{\Dir}{\mathfrak{Dir}}
\newcommand{\st}{\;\ifnum\currentgrouptype=16 \middle\fi|\;}
\begin{document}
\title{New insights in the geometry and interconnection of port-Hamiltonian systems}

\author{M.\ Barbero-Li\~n\'an\textsuperscript{\textsection,$\ddagger$}, H.\ Cendra\textsuperscript{$\dagger$}, E.\ Garc\'{\i}a-Tora\~{n}o Andr\'{e}s\textsuperscript{$\dagger$},  and \\ D.\ Mart\'{\i}n de Diego\textsuperscript{$\ddagger$}
\\[2mm]
{\small \textsection\, Departamento de Matem\'atica Aplicada, Universidad Polit\'ecnica de Madrid,} \\
{\small Av. Juan de Herrera 4, 28040 Madrid, Spain}
\\[2mm]
{\small $\dagger$ Departamento de Matem\'atica,
Universidad Nacional del Sur, CONICET,}  \\
{\small  Av.\ Alem 1253, 8000 Bah\'ia Blanca, Argentina}
\\[2mm]
{\small $\ddagger$  Instituto de Ciencias Matem\'aticas (CSIC-UAM-UC3M-UCM),} \\
{\small C/Nicol\'as Cabrera 13-15, 28049 Madrid, Spain}}

\date{}

\maketitle

\begin{abstract}
We discuss a new geometric construction of port-Hamiltonian systems. Using this framework, we revisit the notion of interconnection providing it with an intrinsic description. Special emphasis on theoretical and applied examples is given throughout the paper to show the applicability and the novel contributions of the proposed framework.
\end{abstract}

\section{Introduction}\label{sec:Intro}

Dirac structures were introduced in~\citep{CouWei}, partially motivated by the Dirac theory of constraints,  as a unified approach to symplectic and Poisson geometry (see~\citep{Courant,Dorfman} for more details). Besides their genuine geometric interest, Dirac structures have proven to be extremely useful in the modeling of physical systems. Much of its interest comes from the observation that Dirac structures allow for a definition of implicit Hamiltonian systems (as in~\citep{VdSM2,DVdS}) which are general enough to encompass many dynamical systems of interest in mathematical physics. Based on this observation, A.\ van der Schaft and B.\ Maschke defined the notion of port-Hamiltonian system (meaning a Hamiltonian systems with ``ports'') which describes general Hamiltonian systems that can be interconnected through their ports to build more complex physical systems. The approach of port-Hamiltonian systems has also been successfully employed to describe irreversible thermodynamic processes~\citep{Thermo}.

As recognized by A.\ van der Schaft and collaborators (B.\ Maschke, M.\ Dalsmo, and many others), the notion of port-Hamiltonian system unifies geometric mechanics with network theory, and therefore provides a natural framework to study interconnection. From this perspective, a good mathematical description of the interconnection will result in a good model of the system under study (this is especially important from the standpoint of numerical analysis). The key observation that Dirac structures naturally encode energy-preserving connections among subsystems can be found in~\citep{VdSM2,DVdS}, which also contain some earlier references and examples supporting this claim. We refer the reader to~\citep{VdS-Book} for a recent survey on the different aspects of port-Hamiltonian systems. In~\cite{vdSMaschkeDistributed}, the so-called Dirac--Stokes structures are used to describe certain  partial differential equations of interest in physics as port-Hamiltonian systems. A spatial discretization of those dynamical systems has been studied in~\cite{DiscreteBoundary}.

In this paper, we wish to take an alternative look to the theory of port-Hamiltonian systems which emphasizes their intrinsic description. The main contributions of this work are:
\begin{enumerate}[1)]
\item We review the theory of port-Hamiltonian systems from a geometric perspective. We show that the composition of port-Hamiltonian systems can be intrinsically defined by means of the forward and backward operators applied to Dirac structures.
\item We propose an alternative framework for a class of port-Hamiltonian systems based on the notion of coisotropic structure. Within this formalism, we give an interconnection procedure which includes some well-known interconnection schemes (for instance, the composition of port-Hamiltonian systems or the use of the tensor product to interconnect Dirac systems). Some illustrative examples such as  electric circuits or multi-body mechanical systems are discussed in detail.
\item We highlight that many constructions in the literature can be seen as dual to each other via the forward and backward operations. In particular, the so-called ``Representation II'' and ``Representation III'' in~\citep{DVdS} are recovered in our approach through the forward and backward constructions.
\end{enumerate}

The paper is structured as follows. In Section~\ref{Sec:Background} we provide some background to make the text reasonably self-contained. This includes the essential properties of Dirac structures, the notion of forward and backward of a Dirac structure, and some basic theoretical examples. Section~\ref{Sec:PHS} discusses the standard framework of port-Hamiltonian systems. We present a new description of the composition of port-Hamiltonian systems in terms of the forward and backward of Dirac structures, and check that it coincides with the definition due to J.\ Cervera, A.\ J.\ van der Schaft, and A.\ Ba{\~n}os. In Section~\ref{sec:InputOutput} we introduce a new method to construct and interconnect port-Hamiltonian systems that guarantees without additional proofs that the geometric structures and properties associated with the initial systems are preserved after interconnection. The method is illustrated by many examples which help to compare the proposed framework with the existing literature. The final sections are devoted to future work and two technical appendices.

\paragraph{Conventions and terminology.} All objects in this paper are assumed to be smooth, unless otherwise stated. The maps between vector spaces and vector bundles are assumed to be linear. If $\pi_E\colon E\to M$ is a vector bundle, a distribution $D$ is an assignment of a subspace $x\mapsto D_x\subset E_x$ for each $x\in M$. We will say that the distribution $D$ is regular if $\text{dim}(D_x)$ is independent of $x$. Regular (smooth) distributions are called subbundles. In general, our conventions agree with those on~\citep{AbMa}.

\paragraph{Acknowledgements.} The authors have been partially supported by Ministerio de Econom\'ia y Competitividad (MINECO, Spain) under grant MTM 2015-64166-C2-2P. MB and DMdD acknowledge financial support from the Spanish Ministry of Economy and Competitiveness, through the   research grants MTM2013-42870-P, MTM2016-76702-P and ``Severo Ochoa Programme for Centres of Excellence'' in R\&D (SEV-2015-0554). HC thanks the following institutions from Argentina: CONICET, Project PIP 2013-2015: 11220120100532CO; ANPCyT Project PICT 2013 1302; Univesidad Nacional del Sur, PGI 24L098. He also thanks the hospitality of ICMAT and also, the University
of Gr\" oningen, and many useful conversations with Arjan van der Schaft. EGTA thanks the CONICET for financial support through a Postdoctoral Grant.

\section{Dirac structures: forward and backward}
\label{Sec:Background}

In this section we review the essential definitions and results we need from Dirac structures and the dynamical systems associated to them.

\subsection{Linear Dirac structures}

Let $V$ be a $n$-dimensional vector space and $V^*$ be its dual space. We consider the non-degenerate symmetric pairing $\ll \cdot, \cdot \gg $ on $V\oplus V^*$ given by
\[
\ll (v_1,\alpha_1),(v_2,\alpha_2) \gg=\langle \alpha_1, v_2\rangle + \langle \alpha_2,v_1\rangle\,,
\label{Eq:Df_Sym_Pair}
\]
for $ (v_1,\alpha_1),(v_2,\alpha_2)\in V\oplus V^*$, where $\langle \cdot, \cdot \rangle$ is the natural pairing between $V^*$ and $V$. A \emph{linear Dirac structure on $V$} is a subspace $D\subset V\oplus V^*$ such that $D=D^\perp$, where $D^\perp$ is the orthogonal subspace of $D$ relative to the pairing $\ll \cdot, \cdot \gg$.  Note that according to this definition, the condition $D=D^\perp$ implies that $\langle \alpha, v \rangle=0$ for each $(v,\alpha)\in D$. Actually, it is not hard to check that a vector subspace $D\subset V\oplus V^*$ is a Dirac structure on $V$ if and only if it is maximally isotropic with respect to the symmetric pairing $\ll \cdot, \cdot \gg $, namely, if ${\rm dim} \, D=n$ and $\ll (v_1,\alpha_1), (v_2,\alpha_2) \gg =0$ for all $(v_1,\alpha_1)$, $ (v_2,\alpha_2)$ in $D$. Using this, we have the following three important examples of Dirac structures:

\begin{enumerate}[a)]
\item Let $F$ be a subspace of $V$, the annihilator $F^\circ$ of $F$ is the subspace of $V^*$ defined as follows
\begin{equation*}
F^\circ=\{\alpha \in V^* \st \langle \alpha, v \rangle=0 \; \mbox{ for all }\; v\in F\}.
\end{equation*}
It can be easily proved that $D_F=F\oplus F^\circ$ is a Dirac structure on $V$.
\item On a presymplectic vector space $(V,\omega)$, the graph of the musical isomorphism $\omega^\flat$ defines a Dirac structure that we denote $D_\omega$:
\begin{equation*}
D_\omega=\{ (v,\alpha)\in V \oplus V^* \st \alpha=\omega^\flat(v)\},
\end{equation*}
(recall that $\omega^\flat\colon V\rightarrow V^*$ is defined by $\omega^\flat(u)(v)=\omega(u,v)$ for all $u$, $v$ in $V$).
\item Let $\Lambda:V^*\times V^*\to \R$ is a bivector on $V$. Then $\sharp_\Lambda:V^*\to V$ is defined as $\langle\beta,\sharp_\Lambda(\alpha)\rangle=\Lambda(\beta,\alpha)$, with $\alpha,\beta\in V^*$, and its graph defines the Dirac structure
\begin{equation*}
D_\Lambda=\{(v,\alpha)\in V \oplus V^* \st v=\sharp_\Lambda(\alpha)\}.
\end{equation*}
\end{enumerate}

The following fundamental result can be found in~\citep{Courant}:
\begin{proposition}
Let $D$ be a Dirac structure on $V$. Define the subspace $F_D\subset V$ to be the projection of $D$ on $V$. Let $\omega_D$ be the 2-form on $F_D$ given by $\omega_D (u,v)=\alpha(v)$, where $u\oplus \alpha\in D$. Then $\omega_D$ is a skew form on $F_D$. Conversely, given a vector space $V$, a subspace $F\subset V$ and a skew form $\omega$ on $F$,
\begin{equation*}
D_{F,\omega}=\{u\oplus \alpha \st u\in F, \; \alpha(v)=\omega(u,v) \; \mbox{for all} \; v\in F\}
\end{equation*}
is the only Dirac structure $D$ on $V$ such that $F_D=F$ and $\omega_D=\omega$. \label{Prop:SkewFormDirac}
\end{proposition}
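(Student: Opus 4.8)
The plan is to establish the statement in two movements: first that the pair $(F_D,\omega_D)$ extracted from a Dirac structure $D$ is well-defined with $\omega_D$ skew, and then that $D_{F,\omega}$ is the unique Dirac structure realizing a prescribed $(F,\omega)$. For the first movement I would begin with the well-definedness of $\omega_D$. The only ambiguity is the choice of $\alpha$ with $u\oplus\alpha\in D$: if $u\oplus\alpha$ and $u\oplus\alpha'$ both lie in $D$, then by linearity $0\oplus(\alpha-\alpha')\in D$. Pairing this against an arbitrary $v\oplus\beta\in D$ and using that $\ll\cdot,\cdot\gg$ vanishes on $D$ gives $(\alpha-\alpha')(v)=0$ for every $v\in F_D$, so $\alpha(v)$ depends only on $u$ and $v$; hence $\omega_D$ is well-defined, and linearity in each argument is immediate. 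For skew-symmetry I would take $u\oplus\alpha,\,v\oplus\beta\in D$ and read off $0=\ll(u,\alpha),(v,\beta)\gg=\alpha(v)+\beta(u)=\omega_D(u,v)+\omega_D(v,u)$, again from isotropy.

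For the converse, the core is to verify that $D_{F,\omega}$ is maximally isotropic, which by the characterization recalled above is equivalent to being a Dirac structure. Isotropy is a direct computation: for $(u,\alpha),(v,\beta)\in D_{F,\omega}$ the defining conditions give $\alpha(v)=\omega(u,v)$ and $\beta(u)=\omega(v,u)$, so their pairing is $\omega(u,v)+\omega(v,u)=0$ by skewness of $\omega$. For the dimension I would consider the projection $D_{F,\omega}\to F$ onto the $V$-factor: it is surjective by construction (any linear functional $\omega(u,\cdot)$ on $F$ extends to some $\alpha\in V^*$), and its kernel consists of the elements $0\oplus\alpha$ with $\alpha$ vanishing on $F$, i.e.\ $\{0\}\oplus F^\circ$. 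Hence $\dim D_{F,\omega}=\dim F+\dim F^\circ=\dim F+(n-\dim F)=n$, giving maximal isotropy.

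It remains to check that $D_{F,\omega}$ induces exactly the data $(F,\omega)$ and that it is unique. The projection of $D_{F,\omega}$ onto $V$ equals $F$ by construction, so $F_{D_{F,\omega}}=F$, and the defining relation yields $\omega_{D_{F,\omega}}(u,v)=\alpha(v)=\omega(u,v)$. For uniqueness, suppose $D$ is any Dirac structure with $F_D=F$ and $\omega_D=\omega$. Any $(u,\alpha)\in D$ then satisfies $u\in F$ and $\alpha(v)=\omega_D(u,v)=\omega(u,v)$ for all $v\in F$, whence $(u,\alpha)\in D_{F,\omega}$; thus $D\subseteq D_{F,\omega}$, and since both have dimension $n$ they coincide.

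The step I expect to require the most care is the well-definedness of $\omega_D$, since it is precisely there that the full strength of the maximal-isotropy condition $D=D^\perp$ enters, rather than mere skewness of the pairing; the dimension count for $D_{F,\omega}$ is the other delicate point, where one must argue via the annihilator that no dimension is lost or gained.
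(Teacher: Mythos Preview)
Your proof is correct and complete. The paper itself does not supply a proof of this proposition: it is stated as a fundamental result and attributed to \citep{Courant}, so there is no in-paper argument to compare against. Your argument follows the standard route one finds in the literature on linear Dirac structures---well-definedness and skewness of $\omega_D$ from isotropy of $D$, then the dimension count $\dim D_{F,\omega}=\dim F+\dim F^\circ=n$ via the short exact sequence $0\to F^\circ\to D_{F,\omega}\to F\to 0$, and uniqueness by inclusion plus equal dimension---and each step is handled cleanly.
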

\noindent In other words, a Dirac structure $D$ on $V$ is uniquely determined by a subspace $F_D\subset V$ and a 2-form $\omega_D$. The case $F=V$ is the example (b) above. The set of Dirac structures on $V$ will be denoted by $\Dir(V)$.

One of the remarkable properties of Dirac structures is that there are ``push-forward'' and ``pull-back'' operations. We now discuss these constructions, which will play a major role in the text. Let $\varphi\colon V\to  W$ be a linear map between vector spaces, and let $D_W$ be a Dirac structure on $W$. It is possible to induce a Dirac structure $D_V$ on $V$, the \emph{backward of $D_W$ by $\varphi$}, denoted by $D_V=\B \varphi(D_W)$, as follows:
\[
D_V=\B \varphi(D_W)=\{(v,\varphi^*w^*)\in V\oplus V^* \st v\in V, \; w^*\in W^*, \; (\varphi v,w^*)\in D_W\}.
\]
In a similar way, if $D_V$ is a Dirac structure on $V$, we can construct a Dirac structure $D_W=\F \varphi(D_V)$ on $W$, the \emph{forward of $D_V$ by $\varphi$},
\[
D_W=\F \varphi(D_V)=\{(\varphi v,w^*)\in W\oplus W^* \st v\in V, \; w^*\in W^*, \; (v,\varphi^*w^*)\in D_V\}.
\]
One can think of the assignments $\B \varphi$ and $\F \varphi$ as maps between the sets $\Dir(V)$ and $\Dir(W)$ (see the diagram below). The following rules hold:
\[
\F (\varphi_1\circ \varphi_2)=\F \varphi_1\circ \F \varphi_2,\qquad \B (\varphi_1\circ \varphi_2)=\B \varphi_2\circ \B \varphi_1.
\]
A more detailed exposition of these notions might be found in e.g.~\citep{Burs,RadkoBursztyn,2012CenRaYo}.

\begin{figure}[h!]
\centering
\includegraphics{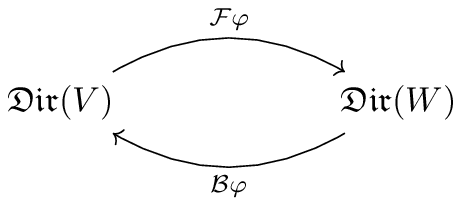}
\end{figure}

\paragraph{Isotropic and coisotropic structures.} Besides the notion of Dirac structure, we will need more general structures on $V\oplus V^*$. A subspace $\Sigma\subset V\oplus V^*$ is called:
\begin{enumerate}[1)]
\item  \textit{isotropic} if $\Sigma \subset \Sigma^\perp$.
\item  \textit{coisotropic} if $ \Sigma^\perp \subset \Sigma$.
\end{enumerate}
Recall that ``$\perp$'' denotes the orthogonal subspace w.r.t. the pairing $\ll \cdot, \cdot \gg $. It follows that if $\Sigma$ is isotropic then $\Sigma^\perp$ is coisotropic and conversely. We will also say that $\Sigma$ is an isotropic -or coisotropic- structure. It can be shown that a Dirac structure $D\subset V\oplus V^*$ is characterized by one of the following three equivalent conditions: $D$ is maximally isotropic, $D$ is minimally coisotropic or $D$ is both isotropic and coisotropic. One can also prove that a subspace $\Sigma$ is isotropic if, and only if, it is a subspace of a Dirac structure. Similarly, $\Sigma$ is coisotropic if, and only if, it contains a Dirac structure. We will need later the following lemma:

\begin{lemma}\label{lem:inclussions} Let $V,W$ be vector spaces and $\varphi\colon V\to W$ a linear map.
\begin{enumerate}[i)]
\item  If $\Sigma\subset V\oplus V^*$ is isotropic (resp. coisotropic), then $\F\varphi(\Sigma)\subset W\oplus W^*$ is isotropic (resp. coisotropic).
\item  If $\Sigma'\subset W\oplus W^*$ is isotropic (resp. coisotropic), then $\B\varphi(\Sigma')\subset V\oplus V^*$ is isotropic (resp. coisotropic).
\end{enumerate}
\end{lemma}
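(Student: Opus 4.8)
The plan is to reduce all four assertions to two facts already recorded above, rather than to compute with the pairing directly. The first is that $\F\varphi$ and $\B\varphi$ preserve inclusions of subspaces; the second is that they send Dirac structures to Dirac structures, which is precisely the content of viewing them as maps $\F\varphi\colon\Dir(V)\to\Dir(W)$ and $\B\varphi\colon\Dir(W)\to\Dir(V)$. These are linked to the hypotheses by the characterization recalled just before the lemma: a subspace is isotropic exactly when it is contained in some Dirac structure, and coisotropic exactly when it contains some Dirac structure.

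First I would verify monotonicity, which is immediate from the definitions: if $\Sigma_1\subset\Sigma_2\subset V\oplus V^*$ and $(\varphi v,w^*)\in\F\varphi(\Sigma_1)$, then $(v,\varphi^*w^*)\in\Sigma_1\subset\Sigma_2$, whence $(\varphi v,w^*)\in\F\varphi(\Sigma_2)$; the one-line argument for $\B\varphi$ is identical. The four statements then become a short chase. For part i) with $\Sigma$ isotropic, choose a Dirac structure $D$ on $V$ with $\Sigma\subset D$; monotonicity gives $\F\varphi(\Sigma)\subset\F\varphi(D)$, and since $\F\varphi(D)$ is again Dirac, $\F\varphi(\Sigma)$ lies inside a Dirac structure and is isotropic. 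If instead $\Sigma$ is coisotropic, pick $D$ Dirac with $D\subset\Sigma$; then $\F\varphi(D)\subset\F\varphi(\Sigma)$ with $\F\varphi(D)$ Dirac, so $\F\varphi(\Sigma)$ contains a Dirac structure and is coisotropic. Part ii) is the verbatim argument with $\B\varphi$ in place of $\F\varphi$.

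I do not expect a real obstacle along this route, as it only repackages results stated above; the point is to choose it over a direct computation. The isotropic cases alone do fall out of the single identity $\langle w_1^*,\varphi v_2\rangle+\langle w_2^*,\varphi v_1\rangle=\langle\varphi^*w_1^*,v_2\rangle+\langle\varphi^*w_2^*,v_1\rangle$, a consequence of $\langle w^*,\varphi v\rangle=\langle\varphi^*w^*,v\rangle$, which says that $\ll\cdot,\cdot\gg$ of two forward vectors in $W\oplus W^*$ equals $\ll\cdot,\cdot\gg$ of the corresponding original vectors in $V\oplus V^*$. The coisotropic cases, however, resist this approach: a direct argument would require an explicit description of $(\F\varphi(\Sigma))^\perp$ (equivalently, an identity relating $\perp$ to $\F\varphi$), which is awkward to pin down. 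Passing through the Dirac-structure characterization avoids this entirely and treats forward and backward, isotropic and coisotropic, on exactly the same footing.
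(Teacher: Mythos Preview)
Your proposal is correct and follows essentially the same route as the paper's proof: both use that $\F\varphi$ and $\B\varphi$ preserve inclusions and send Dirac structures to Dirac structures, together with the characterization of isotropic/coisotropic subspaces as those contained in/containing a Dirac structure. Your write-up is in fact more detailed, since you spell out the monotonicity check and explain why the Dirac-structure route is preferable to a direct computation with the pairing.
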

\begin{proof} Let $\Sigma$ be isotropic. Consider a Dirac structure $D$ such that $\Sigma\subset D$. Since the operator $\F\varphi$ preserves the inclusion, $\F\varphi(\Sigma)\subset \F\varphi(D)$. As $\F\varphi(D)$ is a Dirac structure, $\F\varphi(\Sigma)$ is isotropic. If $\Sigma$ is coisotropic, the reasoning is similar. The backward case is analogous using that $\B\varphi$ preserves the inclusion.
\end{proof}

A more detailed description of isotropic and coisotropic structures in vector spaces can be found in Appendix~\ref{ap:A}.

\subsection{Dirac structures on a manifold}\label{sec:DiracOnManifolds}

A \emph{Dirac structure} $D$ on a manifold $M$, is a vector subbundle of the Whitney sum $TM \oplus T^*M$ such that $D_x\subset T_xM \oplus T_x^*M$ is a linear Dirac structure on the vector space $T_xM$ at each point $x\in M$. A \textit{Dirac manifold} is a manifold $M$ with a Dirac structure $D$ on $M$.

From Proposition~\ref{Prop:SkewFormDirac}, a Dirac structure on $M$ yields a distribution $F_{D_x}\subset T_xM$ whose dimension is not necessarily constant, carrying a 2-form $\omega_D(x)\colon F_{D_x}\times F_{D_x} \rightarrow \mathbb{R}$ for all $x\in M$. The following result is proved in~\citep{DVdS}, (see also~\citep{2006YoshiMarsdenI}).
\begin{theorem}\label{Thm:DiracStructM}
Let $M$ be a manifold,  $\omega$ be a 2-form on $M$ and $F$ be a regular distribution on $M$. Define the skew-symmetric bilinear form $\omega_F$ on $F$ by restricting $\omega$ to $F\times F$. For each $x\in M$, define
\begin{align*}
D_{\omega_F }(x)=\left\{(v_x,\alpha_x)\in T_xM \oplus T_x^*M \right. \st& v_x\in F_x, \; \alpha_x(u_x)=\omega_F(x)(v_x,u_x) \; \\
& \left. \mbox{for all } \; u_x\in F_x\right\}.
\end{align*}
Then $D_{\omega_F}\subset TM \oplus T^*M$ is a Dirac structure on $M$. In fact, it is the only Dirac structure $D$ on $M$ satisfying $F_x=F_{D_x}$ and $\omega_F(x)=\omega_D(x)$ for all $x\in M$.
\end{theorem}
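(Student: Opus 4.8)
The plan is to reduce the statement to its pointwise (linear) counterpart, Proposition~\ref{Prop:SkewFormDirac}, and then to promote the pointwise assignment $x\mapsto D_{\omega_F}(x)$ to a genuine smooth subbundle; the only delicate point is smoothness, and for this the regularity of $F$ is essential.

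First I would fix $x\in M$. The restriction $\omega_F(x)$ is a skew-symmetric bilinear form on the subspace $F_x\subset T_xM$, so Proposition~\ref{Prop:SkewFormDirac}, applied to the vector space $V=T_xM$, the subspace $F=F_x$ and the form $\omega_F(x)$, shows that $D_{\omega_F}(x)=D_{F_x,\omega_F(x)}$ is a linear Dirac structure on $T_xM$, and in fact the unique one with $F_{D_x}=F_x$ and $\omega_{D_x}=\omega_F(x)$. In particular $\dim D_{\omega_F}(x)=\dim T_xM=n$ for every $x$, so the fibre dimension is automatically constant. This already settles both the pointwise Dirac property and the uniqueness clause: if $D$ is any Dirac structure on $M$ with $F_x=F_{D_x}$ and $\omega_F(x)=\omega_D(x)$ for all $x$, then $D_x=D_{\omega_F}(x)$ pointwise by the linear uniqueness, whence $D=D_{\omega_F}$.

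It remains to check that $D_{\omega_F}$ is a smooth vector subbundle of $TM\oplus T^*M$, and this is where I expect the only real work to lie. Because $F$ is regular it is a subbundle of some rank $k$, so around any point there is a local frame $e_1,\dots,e_k$ of $F$; correspondingly its annihilator $F^\circ$ is a subbundle of $T^*M$ of rank $n-k$, admitting a local frame $\eta_1,\dots,\eta_{n-k}$. I would then exhibit an explicit local frame for $D_{\omega_F}$. The sections $(e_i,\iota_{e_i}\omega)$, for $i=1,\dots,k$, are smooth and lie in $D_{\omega_F}$, since $e_i(x)\in F_x$ and $(\iota_{e_i}\omega)(x)(u_x)=\omega(x)(e_i(x),u_x)=\omega_F(x)(e_i(x),u_x)$ for all $u_x\in F_x$. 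The sections $(0,\eta_j)$, for $j=1,\dots,n-k$, are likewise smooth and lie in $D_{\omega_F}$, because $\eta_j(x)\in F_x^\circ$ makes the defining condition read $0=\omega_F(x)(0,u_x)$. These $n$ sections are pointwise linearly independent: a combination $\sum_i a_i(e_i,\iota_{e_i}\omega)+\sum_j b_j(0,\eta_j)$ vanishing at $x$ gives $\sum_i a_i e_i(x)=0$ in the $TM$-slot, forcing $a_i=0$ by independence of the $e_i$, and then $\sum_j b_j\eta_j(x)=0$ forces $b_j=0$. Since $\dim D_{\omega_F}(x)=n$, they span the fibre at each point, so $D_{\omega_F}$ is locally spanned by $n$ smooth, everywhere independent sections and is therefore a rank-$n$ smooth subbundle.

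Putting the two parts together, $D_{\omega_F}$ is a subbundle whose fibres are linear Dirac structures, i.e.\ a Dirac structure on $M$, and it is the unique one inducing the prescribed $F$ and $\omega_F$. The main obstacle is precisely the smoothness argument: the regularity hypothesis on $F$ is exactly what provides the local frames for $F$ and $F^\circ$, and hence the constant fibre dimension needed for a subbundle. Without it the pointwise construction would still yield a linear Dirac structure in each fibre, but $x\mapsto D_{\omega_F}(x)$ would not in general assemble into a smooth subbundle.
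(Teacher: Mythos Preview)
Your proof is correct. The paper does not actually give its own proof of this theorem; it merely states the result and refers the reader to~\citep{DVdS} (and also~\citep{2006YoshiMarsdenI}). So there is no in-paper argument to compare against.

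Your approach is exactly the natural one: invoke Proposition~\ref{Prop:SkewFormDirac} pointwise for existence and uniqueness of the linear Dirac structure, and then handle smoothness by exhibiting the explicit local frame $(e_i,\iota_{e_i}\omega)$, $(0,\eta_j)$ built from local frames of $F$ and $F^\circ$. The identification of regularity of $F$ as precisely the hypothesis enabling this frame construction (and hence the subbundle property) is also correct and well explained.
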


\noindent As usual, we have used the terminology \emph{regular distribution} to mean that $F$ has constant rank.  Examples of Theorem~\ref{Thm:DiracStructM} are the case $\omega=0$ where $D_{\omega_F}=F\oplus F^\circ\subset TM\oplus T^*M$, and the case $F=TM$ where $D_\omega$ is the graph of $\omega$.

The dual version of Theorem~\ref{Thm:DiracStructM} is as follows (see, for instance, \citep{DVdS}).

\begin{theorem}\label{Thm:DiracStructMDual} Let $M$ be a manifold and let $B\colon T^*M\times T^*M\rightarrow \mathbb{R}$ be a skew-symmetric two-tensor. Given a regular codistribution $F^{(*)}\subset T^*M$ on $M$, define the skew-symmetric two-tensor $B_{F^{(*)}}$ on $F^{(*)}$ by restricting $B$ to $F^{(*)}\times F^{(*)}$. For each $x\in M$, let
\begin{align*}
D_{B_{F^{(*)}}}(x)=\left\{(v_x,\alpha_x)\in T_xM\times T_x^*M \right. &\st \alpha_x\in F^{(*)}_x,\; \beta_x(v_x) = B_{F^{(*)}}(x)(\beta_x,\alpha_x) \; \\
& \left. \mbox{for all } \; \beta_x\in F^{(*)}_x\right\}.
\end{align*}
Then $D_{B_{F^{(*)}}}\subset TM \oplus T^*M$ is a Dirac structure on $M$.
\end{theorem}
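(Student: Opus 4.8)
The plan is to prove the statement in two stages: a pointwise (linear-algebraic) verification that each fibre $D_{B_{F^{(*)}}}(x)$ is a linear Dirac structure on $T_xM$, and a smoothness argument showing that these fibres assemble into a vector subbundle of $TM\oplus T^*M$. Conceptually the whole result is the mirror image of Theorem~\ref{Thm:DiracStructM} under the fibrewise swap $T_xM\oplus T_x^*M\to T_x^*M\oplus T_xM$, $(v,\alpha)\mapsto(\alpha,v)$, which preserves the symmetric pairing $\ll\cdot,\cdot\gg$ (this uses only the symmetry of the pairing and the identification $T_x^{**}M=T_xM$). In principle one could therefore deduce the theorem by applying the construction of Theorem~\ref{Thm:DiracStructM} to $T^*M$ in the role of $TM$ and transporting the resulting Dirac structure back by the swap. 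I would nonetheless spell out the direct argument, since it is short and makes the role of each hypothesis transparent.

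For the pointwise part, fix $x$, write $V=T_xM$ and $k=\dim F^{(*)}_x$, and recall that since the arguments $\beta,\alpha$ below lie in $F^{(*)}_x$ we may write $B$ for $B_{F^{(*)}}(x)$. The defining condition says $\alpha\in F^{(*)}_x$ and $\beta(v)=B(\beta,\alpha)$ for all $\beta\in F^{(*)}_x$. For each fixed $\alpha$ this is an inhomogeneous linear condition on $v$ whose solution set is a coset of the annihilator $(F^{(*)}_x)^{\circ}=\{v\in V\st \beta(v)=0 \ \text{for all}\ \beta\in F^{(*)}_x\}$, of dimension $n-k$; it is nonempty because $v=\sharp_B\alpha$ solves it, as $\beta(\sharp_B\alpha)=B(\beta,\alpha)$ by the definition of $\sharp_B$. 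Letting $\alpha$ range over $F^{(*)}_x$ gives $\dim D_{B_{F^{(*)}}}(x)=k+(n-k)=n$. Isotropy is immediate from skew-symmetry: for $(v_1,\alpha_1),(v_2,\alpha_2)$ in the fibre, inserting $\beta=\alpha_1$ and $\beta=\alpha_2$ into the defining relation gives $\ll(v_1,\alpha_1),(v_2,\alpha_2)\gg=\alpha_1(v_2)+\alpha_2(v_1)=B(\alpha_1,\alpha_2)+B(\alpha_2,\alpha_1)=0$. A subspace that is isotropic and of dimension $n=\dim V$ is maximally isotropic, hence a linear Dirac structure.

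The delicate step, and the only place where regularity of $F^{(*)}$ is used, is smoothness. Note first that the fibre dimension is already constantly $n$ with no assumption on $F^{(*)}$, so the issue is not constancy of rank but the existence of local smooth frames for $x\mapsto D_{B_{F^{(*)}}}(x)$. Here I would use that a \emph{regular} codistribution $F^{(*)}$ is a smooth subbundle of $T^*M$, so its annihilator $(F^{(*)})^{\circ}\subset TM$ is a smooth subbundle of rank $n-k$, and that $\sharp_B\colon T^*M\to TM$ is a smooth bundle morphism. The identity $\beta(v)=B(\beta,\alpha)=\beta(\sharp_B\alpha)$ for all $\beta\in F^{(*)}_x$ rewrites the fibre as $D_{B_{F^{(*)}}}(x)=\{(\sharp_B\alpha+w,\alpha)\st \alpha\in F^{(*)}_x,\ w\in(F^{(*)}_x)^{\circ}\}$, which exhibits $D_{B_{F^{(*)}}}$ as the image of the smooth, fibrewise injective bundle morphism $\Phi\colon F^{(*)}\oplus(F^{(*)})^{\circ}\to TM\oplus T^*M$, $\Phi(\alpha,w)=(\sharp_B\alpha+w,\alpha)$, whose source is a smooth bundle of constant rank $n$ (Whitney sum over $M$). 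An injective bundle morphism of constant rank has image a smooth subbundle, so $D_{B_{F^{(*)}}}$ is a vector subbundle, and together with the pointwise computation it is a Dirac structure on $M$. I expect the main obstacle to be exactly this last point, namely checking that $\Phi$ is fibrewise injective of constant rank so that its image is genuinely a subbundle and not merely a fibrewise-linear family; this is where the regularity of $F^{(*)}$ is essential, though the difficulty is technical rather than deep.
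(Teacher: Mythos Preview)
Your proof is correct. The paper itself does not supply a proof of this theorem: it simply states the result as ``the dual version of Theorem~\ref{Thm:DiracStructM}'' and refers the reader to~\citep{DVdS}. Your observation that the statement is obtained from Theorem~\ref{Thm:DiracStructM} by the swap $(v,\alpha)\mapsto(\alpha,v)$ on $T_xM\oplus T_x^*M$ is exactly the paper's framing, so at the conceptual level you are in full agreement with the paper. What you add beyond the paper is a self-contained direct argument: the dimension count via the coset description, the isotropy check from skew-symmetry of $B$, and---most usefully---the explicit smooth parametrisation $\Phi\colon F^{(*)}\oplus (F^{(*)})^{\circ}\to TM\oplus T^*M$, $(\alpha,w)\mapsto(\sharp_B\alpha+w,\alpha)$, which makes the subbundle property transparent and isolates precisely where regularity of $F^{(*)}$ enters. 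This is a clean and complete proof; there is nothing to compare it against in the paper beyond the duality remark, with which you already align.
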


\noindent As an example, let $(M,\Lambda)$ be a Poisson manifold where $B=\Lambda\colon T^*M \times T^*M \rightarrow \mathbb{R}$. If $F^{(*)}=T^*M$, then the Dirac structure defined in Theorem~\ref{Thm:DiracStructMDual} is the graph of the Poisson structure thought of as a map from $T^*M$ to $TM$.

\remark It is proved in \citep{DVdS} that under the assumptions of constant rank of the distributions/codistributions (viz. regularity), Theorems~\ref{Thm:DiracStructM} and~\ref{Thm:DiracStructMDual} describe the only two canonical Dirac structures that can be defined on a manifold, one associated with a presymplectic form and the other one with a two-tensor.

\begin{remark} A Dirac structure $D$ on $M$ is called \textit{integrable} (see~\citep{Courant}) if the condition
\[
\langle {\rm L}_{X_1} \alpha_2, X_3 \rangle + \langle {\rm L}_{X_2} \alpha_3, X_1 \rangle +\langle {\rm L}_{X_3} \alpha_1, X_2 \rangle=0
\]
is satisfied for all pairs of vector fields and 1-forms $(X_1,\alpha_1)$, $(X_2,\alpha_2)$, $(X_3,\alpha_3)$ in $D$, where ${\rm L}_X$ denotes the Lie derivative along the vector field $X$ on $M$. This condition is linked to the notion of closedness for presymplectic forms and Jacobi identity for brackets, and it is sometimes included in the definition of Dirac structure. The integrability condition is too restrictive to describe, for instance, nonholonomic systems, and for this reason we don't include the integrability in the general definition of a Dirac structure. This is the same convention as in~\citep{VdS-Book} or~\citep{JacYo}.

\end{remark}

Let us assume that we have a smooth map $f\colon M\to N$ between two manifolds $M$ and $N$, and that $D_{TN}\subset TN\oplus T^*N$ is a Dirac structure. At each point $x\in M$, one can use the backward of the map $T_xf$ to construct a subspace of $T_xM\oplus T_x^*M$. This construction defines a distribution of $TM\oplus T^*M$ which in general is not smooth. Whenever it is smooth, it defines a new Dirac structure $D_{TM}$ on $M$ called the \emph{backward} of $D_{TN}$ by the map $Tf$, and we write $D_{TM}=\B (Tf)\left(D_{TN}\right)$. We will use the following results (see e.g.~\citep{Burs,2012CenRaYo}):
\begin{enumerate}[(i)]
\item If $T_xf$ is surjective for each $x\in M$, then $D_{TM}=\B (Tf)\left(D_{TN}\right)$ is a Dirac structure on $M$.
\item If $i_M\colon M\hookrightarrow N$ is a submanifold, then $D_{TM}=\B (Ti_M)\left(D_{TN}\right)=\{(v,\alpha)\in D_{TN}\st \alpha\in TM^\circ\}$ is a Dirac structure if $D_{TN}\cap (\{0\}\oplus TM^\circ)$ has constant rank (the clean-intersection condition), where $TM^\circ=\{\alpha\in T^*N\st \alpha(v)=0,\, \forall v\in TM\}$ is the annihilator.
\item If $D_{TN}$ is given by the graph of a 2-form $\omega$ on $N$, then $D_{TM}=\B (Tf)\left(D_{TN}\right)$ is a Dirac structure on $M$.
\end{enumerate}

\noindent Let now $f\colon M\to N$ be a smooth map, and $D_{TM}$ be a Dirac structure on $M$. When we aim at defining the \emph{forward} of $D_{TM}$, we first need to ask for $Tf$-invariance of $D_{TM}$, meaning that
\[
\F (T_xf)(D_{TM}(x))=\F (T_{x'}f)(D_{TM}(x')),\qquad \text{whenever } f(x)=f(x').
\]
A sufficient condition to ensure that $\F (Tf)(D_{TM})$ defines a Dirac structure is the following:
\begin{enumerate}[(iv)]
\item Let $f\colon M\to N$ be a surjective submersion and $D_{TM}$ be a Dirac structure on $M$. If $D_{TM}$ is $Tf$-invariant and $\{(v,\alpha)\in D_{TM}\st v\in\ker(Tf)\}$ has constant rank, then $D_{TN}=\F (Tf)\left(D_{TM}\right)$ defines a forward Dirac structure.
\end{enumerate}

\remark Note that a Dirac structure on a manifold $M$ will be denoted by $D_{TM}\subset TM\oplus T^*M$ since we will be working with Dirac structures on more general vector bundles. Actually, we will use the expression ``Dirac structure on (the vector bundle) $TM$'' rather than ``Dirac structure on $M$''. For the same reason, given a map $f\colon M\to N$, the usual notations in the literature for the backward and forward are $\B(f)$ and $\F(f)$, respectively, but we will use the notation $\B(Tf)$ and $\F(Tf)$.

\subsection{Dirac structures on vector bundles}\label{sec:DiracOnVBundles}

The definitions of Dirac structure and the forward and backward operations can be extended naturally to vector bundles. We just give the definitions here, and refer to~\citep{2012CenRaYo} for further details.

A Dirac structure on a vector bundle $\pi_{(V,M)}\colon V\to M$ is a vector subbundle $D_V\subset V\oplus V^*$ such that, at each point $x\in M$, $(D_V)_x\subset V_x\oplus V^*_x$ is a linear Dirac structure on the vector space $V_x$. A vector bundle $\pi_{(V,M)}\colon V\to M$ endowed with a Dirac structure will be referred to as a \emph{Dirac vector bundle}, and denoted by $(\pi_{(V,M)},D_V)$.  Note that this definition includes the case of the tangent bundle $TM\to M$ discussed in the previous section. We will use the following notations: if $D\subset V\oplus V^*$ is a Dirac structure, $F_D$ and $F_D^{(*)}$ denote the projections of $D$ on $V$ and $V^*$, respectively.

If $(\pi_{(V,M)},D_V)$ and $(\pi_{(W,N)},D_W)$ are Dirac vector bundles and $\Phi\colon V\to W$ is a vector bundle map, it is possible to define fiberwise the forward and backward Dirac structures $\F(\Phi)(D_V)\subset W\oplus W^*$ and $\B(\Phi)(D_W)\subset V\oplus V^*$.

\vspace{.2cm}
\noindent \underline{Assumption 1}: Whenever we write expressions such as $\F(\Phi)(D_V)$ or $\B(\Phi)(D_W)$, it is assumed that they are well defined vector subbundles, unless otherwise stated.
\vspace{.2cm}

Note that for the case of Dirac structures on a manifold $M$ (that we can now interpret as Dirac structures on the vector bundle $TM\to M$) some necessary conditions to obtain forward and backward Dirac structures have already been reviewed in Section~\ref{sec:DiracOnManifolds}. General conditions for the existence of the forward and backward maps of Dirac vector bundles can be found in~\citep{2012CenRaYo}. Finally, we point out that the concepts of isotropic and coisotropic subspaces also extend to vector bundles without further difficulty.

\subsection{Dirac systems}\label{sec:DiracSystems}

Assume that the vector bundle $TM$ is endowed with a Dirac structure $D\subset TM\oplus T^*M$. In the presence of an \emph{energy} function $E\colon M\to \mathbb{R}$, we will consider the following implicit dynamical system: for a curve $\gamma\colon I\to M$ (where $I\subset\mathbb{R}$ is an interval), we say that $\gamma$ is a solution of the \emph{Dirac system $(D,{\rm d}E)$} if
\begin{equation}\label{eq:DiracDynamics}
\dot \gamma(t) \oplus {\rm d}E\left(\gamma(t)\right)\in D_{\gamma(t)}\quad  \mbox{ for all }\, t\in I.
\end{equation}
The system described by~\eqref{eq:DiracDynamics}, which we call \emph{Dirac system} on $TM$, is general enough to encompass a number of situations of interest in mathematical physics including  of course classical Lagrangian and Hamiltonian systems, but also nonholomic mechanics or electric LC circuits. The system~\eqref{eq:DiracDynamics} is also referred to as an \emph{implicit Hamiltonian system}, see~\citep{VdSM2,DVdS}).

\begin{example} Let $(M,\Lambda)$ be a Poisson manifold, and $H\colon M\to\R$ the Hamiltonian. For the Dirac structure induced by the graph of $\Lambda$ (as in Theorem~\ref{Thm:DiracStructMDual}) the system~\eqref{eq:DiracDynamics} leads to the Hamilton equations
\[
\dot x=\{x,H\}.
\]
\end{example}
\vspace{.3cm}

\begin{example} LC-circuits can be written as a Dirac system on $TM$ with $M=TQ\oplus T^*Q$ with $Q$ being the ``charge space'', whose points represent charges in the different branches of the circuit. This will be discussed later in some detail, see~Example~\ref{ex:LCcir0}.
\end{example}
\vspace{.3cm}

\remark It is possible to extend the definition of Dirac system to include more general Lagrangian submanifolds of $T^*M$ than the graph of an energy function. We refer the reader to~\citep{2017_DiracMorse} for details and examples which show that this broader definition permits to describe in a natural and unified way many dynamical systems of interest in geometric mechanics.

\section{Dynamics of port-Hamiltonian systems}\label{Sec:PHS}

We will now briefly discuss an important family of systems that are widely known as port-Hamiltonian systems. A certain class of them, the so-called Input-Output systems, will be the main interest of this paper. As mentioned in the Introduction, this notion is essentially due to van der Schaft and collaborators (see e.g.~\citep{VdS-Book} and references therein).

\subsection{General definitions}

We first give the basic definitions to study the geometry and the dynamics of systems with ports. The following definitions are similar to those found in e.g.~\citep{DVdS} (see also~\citep{Merker}):
\begin{definition}\normalfont A \emph{port-Hamiltonian structure} (\textsc{ph}-structure) is a triple
\[
A=(\pi_{(U_1,M)},\pi_{(U_2,M)},D_{(U_1\oplus U_2)})
\]
where $\pi_{(U_i,M)}\colon U_i\to M,\,\,i=1,2,$ are vector bundles over $M$ and $D_{U_1\oplus U_2}$ is a Dirac structure on the vector bundle \[
U_1\oplus U_2=\{(u_1,u_2)\in U_1\oplus U_2\st \pi_{(U_1,M)}(u_1)=\pi_{(U_1,M)}(u_2)\}.
\]
\end{definition}
Note that in the case of vector spaces, the Whitney sum $U_1\oplus U_2$ is identified with $U_1\times U_2$.

\begin{definition}\label{def:PH-System}\normalfont A \emph{port-Hamiltonian system} (\textsc{ph}-system) is a pair $(A,{\rm d}E)$ where $A$ is a \textsc{ph}-structure of the form
\[
A=(\tau_M,\pi_{(U_2,M)},D_{(TM\oplus U_2)})
\]
with $\tau_M:TM\to M$ the tangent bundle of $M$, and $E:M\to \R$ is an energy function.
\end{definition}

In other words, a \textsc{ph}-system is given by a \textsc{ph}-structure with $U_1=TM$ and an energy on $M$. For the rest of the paper we will mostly be interested in this case, namely the case where $U_1=TM$. We will however keep the notation $U_1$ because it is convenient, and also because it points towards generalizations. We will say that $U_2$ is the \emph{flow space} and that its dual $U_2^*$ is the \emph{effort space}.

Given a \textsc{ph}-system $(A,{\rm d}E)$, it determines a dynamics as follows:
\begin{equation}\label{eq:PH-System}
(\dot x,u_2,{\rm d}E(x),-\alpha_2)\in D_{(TM\oplus U_2)}.
\end{equation}
Note that~\eqref{eq:PH-System} is a natural extension of the notion of Dirac system~\eqref{eq:DiracDynamics} (which is the case where $U_2=0$). We will also call~\eqref{eq:PH-System} a \textsc{ph}-system.

The coordinates $x$ (the base point) are known as \emph{state} (sometimes also \emph{energy} or \emph{energy-storing}) variables, while $u_2$ and $\alpha_2$  the (external) \emph{flows} and \emph{efforts}, respectively. The choice of the \emph{ports} (i.e. pairs flow-effort) is not unique and depends on the physical system under study, see~\citep{VdS-Book}. Roughly speaking, the variables $x$ correspond to the state variables of the system under study, while the flows and efforts typically model its interaction with other systems.

An important property of \textsc{ph}-systems is the following power balance equation (which holds in view of the isotropy of $D_{(U_1\oplus U_2)}$):
\begin{equation}\label{eq:powerbalance}
\frac{{\rm d}E}{dt}=\langle \alpha_2,u_2\rangle
\end{equation}
From a physical point of view, it expresses that the gain of energy corresponds to the power transmitted by the ports. The sign convention of $\alpha_2$ in~\eqref{eq:PH-System} is the same as in~\citep{VdSM2}, and means that incoming power is counted positively.

\subsection{Open and closed systems}\label{sec:OpenClose}

The power balance equation~\eqref{eq:powerbalance} shows that \textsc{ph}-systems are suitable for describing the dynamics of system with ``open'' ports, allowing for an energy gain or loss of the system under study. The passage from an open system to a closed system (a system with ``closed'' or ``interconnected'' ports) is achieved through the choice of an \emph{interconnecting} Dirac structure $D_I\subset U_2\oplus U_2^*$ which, roughly speaking, routes the power from the various subsystems of the \textsc{ph}-system under study. The resulting dynamics in which the ports satisfy the algebraic constraint $(f,e)\in D_I$ will be given by a Dirac system on the configuration space $M$. This is best understood by means of an example:

\begin{example}\label{ex:VdS_open1} The following system gives the dynamics of a ``port-controlled generalized Hamiltonian system'' (see~\citep{VdSM2}, page 56):
\begin{align}\label{eq:PortControlled}
\begin{split}
\dot x &= J(x)\frac{\partial E}{\partial x}(x)+g(x)f,\\
e &= g^T(x)\frac{\partial E}{\partial x}(x).
\end{split}
\end{align}
Here the vector $x\in\R^n$ represents the state variables, $J(x)$ is a $n\times n$ skew-symmetric matrix, $g(x)$ is a $n\times m$ linear map, $f\in\R^m$ stand for the flows (inputs) of the system and the function $E$ is the energy of the system. The term $e$ represents the outputs of the system. Finally, $g^T(x)$ denotes the transpose of the matrix $g(x)$. In this example there is an implicit identification between $\R^p$ and its dual $(\R^p)^*$ using the euclidean metric.

To see that~\eqref{eq:PortControlled} is a  \textsc{ph}-system according to Definition~\ref{def:PH-System}, consider the following two vector bundles over $M=\R^n$: $U_1=T\R^n$ and $U_2=\R^n\times \R^m$ (the trivial vector bundle with fiber $\R^m$). Then
\[
D_{(U_1\oplus U_2)}=\{(X, f,\alpha, e)\st J\alpha+g f=X,\; e+g^T\alpha=0\}\subset (U_1\oplus U_2)\oplus (U_1\oplus U_2)^*
\]
defines a Dirac structure in $(U_1\oplus U_2)$ (this is easy to check directly). The corresponding \textsc{ph}-system
\[
(\dot x,f,{\rm d}E,-e)\in D_{(U_1\oplus U_2)}
\]
leads to~\eqref{eq:PortControlled}. In particular, we have that $dE/dt=e^Tf$. Note that, if one thinks of the flow $f$ as an input, the second equation $e= g^T(\partial E/\partial x)(x)$ imposes no constraints on the dynamics, and can be regarded as a definition of the effort in this case.

To interconnect the ports of the system, we choose an interconnecting Dirac structure $D_I\subset U_2\oplus U_2^*$ (this construction will be discussed later in detail). For simplicity, let us take $D_I=U_2\oplus\{0\}$, which leads to the equations:
\begin{align}\label{eq:PortControlled2}
\begin{split}
\dot x &= J(x)\frac{\partial E}{\partial x}(x)+g(x)f,\\
0 &= g^T(x)\frac{\partial E}{\partial x}(x).
\end{split}
\end{align}
We remark that the system~\eqref{eq:PortControlled2} is completely different from~\eqref{eq:PortControlled}: the flows $f$ are now interpreted as multipliers needed in order for the constraint equation $0=g^T(x)\frac{\partial E}{\partial x}(x)$ to be satisfied. Note also that the energy is conserved. Indeed, from the isotropy of $D_I$ it follows that $dE/dt=\langle e,f\rangle=0$. The dynamics given by~\eqref{eq:PortControlled2} can be written as a Dirac system on $TM$ as follows. Consider the Dirac structure
\[
D_{U_1}=\{(X,\alpha)\st \exists f\in\R^m \mbox{ with } J\alpha+g f=X,\; g^T\alpha=0\}\subset U_1\oplus U_1^*.
\]
Then the Dirac system
\[
\dot x\oplus {\rm d}E\in D_{U_1}
\]
is equivalent to~\eqref{eq:PortControlled2}. We will say that we have interconnected or closed the ports of the \textsc{ph}-system~\eqref{eq:PortControlled}.
\end{example}
\vspace{.3cm}

\subsection{Composition of \textsc{PH}-systems}

The operation of closing the ports of a \textsc{ph}-system that we have discussed in Section~\ref{sec:OpenClose} extends to the case where we have multiple \textsc{ph}-systems, resulting in a procedure to interconnect them. This interconnection procedure has been studied under the terminology of \emph{composition of Dirac structures} (see~\citep{CerVdSBan,CerVdSBan2}). The aim of this section is to give a geometric construction of the composition in terms of the backward and forward of Dirac structures. For concreteness, we will focus on the case where we have two \textsc{ph}-systems that we want to interconnect.

We start with the simplest case of vector spaces. Let $U_1,U_2,V_1,V_2$ be finite dimensional vector spaces; we think of $U_2$ and $V_2$ as being the space of flows for \textsc{ph}-systems.  Starting from Dirac structures on $U_1\oplus U_2$ and $V_1\oplus V_2$ (representing \textsc{ph}-systems), and choosing an interconnecting Dirac structure $D_I$ on $U_2\oplus V_2$, we will define a new Dirac structure on $U_1\oplus V_1$ which governs the dynamics of the interconnected system. Note that for the interconnected system there are no ports.

In what follows, we will write $u_i\in U_i$, $v_i\in V_i$, $\alpha_i\in U_i^*$ and $\beta_i\in V_i^*$ to denote elements in $U_i$, $V_i$, and in their duals. We consider the maps
\begin{align*}
\varphi:U_1\times U_2\times V_1\times V_2&\to U_1\times U_2\times U_2\times V_2\times V_1\times V_2,\\
(u_1,u_2,v_1,v_2)&\mapsto (u_1,u_2,u_2,v_2,v_1,v_2),
\end{align*}
and
\begin{align*}
\psi:U_1\times U_2\times V_1\times V_2&\to U_1\times V_1,\\
(u_1,u_2,v_1,v_2)&\mapsto (u_1,v_1).
\end{align*}

\begin{definition}\label{def:composition}\normalfont Let $D_{(U_1\oplus U_2)}$ and $D_{(V_1\oplus V_2)}$ be Dirac structures on $U_1\oplus U_2$ and $V_1\oplus V_2$, respectively, and let $D_I$ be a Dirac structure on $U_2\oplus V_2$. The \emph{composition of $D_{(U_1\oplus U_2)}$ and $D_{(V_1\oplus V_2)}$ via $D_{I}$} is the Dirac structure on $U_1\times V_1$ given by:
\begin{equation}\label{eq:composicion}
D_{(U_1\oplus U_2)}\|_{D_{I}}D_{(V_1\oplus V_2)}=(\F\psi \circ \B\varphi) (D_{(U_1\oplus U_2)}\times D_{I} \times D_{(V_1\oplus V_2)}).
\end{equation}
\end{definition}

Let us first note that $D_{(U_1\oplus U_2)}\times D_{I} \times D_{(V_1\oplus V_2)}$ is a Dirac structure on $U_1\times U_2\times U_2\times V_2\times V_1\times V_2$. Therefore, the composition $D_{(U_1\oplus U_2)}\|_{D_{I}}D_{(V_1\oplus V_2)}$ defines indeed a Dirac structure on $U_1\times V_1$. The coordinate expression of the composition is given in the following proposition, which shows that Definition~\ref{def:composition} is a geometric version of the composition in~\citep{CerVdSBan,CerVdSBan2}:
\begin{proposition}\label{pro:comp} With the notations above, we have
\begin{align*}
D_{(U_1\oplus U_2)}\|_{D_{I}}D_{(V_1\oplus V_2)}=&\{(u_1,v_1,\alpha_1,\beta_1)\st \text{there exist } \; (u_2,v_2,-\alpha_2,-\beta_2)\in D_I \\
& \; \text{such that}\;\;  (u_1,u_2,\alpha_1,\alpha_2)\in D_{(U_1\oplus U_2)}, (v_1,v_2,\beta_1,\beta_2)\in D_{(V_1\oplus V_2)}\}.
\end{align*}
\end{proposition}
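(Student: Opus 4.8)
The plan is to unwind the definition of the composition in~\eqref{eq:composicion} by computing the backward $\B\varphi$ and the forward $\F\psi$ separately, and then to verify that the resulting set matches the explicit description claimed in the proposition. First I would apply the backward operator $\B\varphi$ to the Dirac structure $D_{(U_1\oplus U_2)}\times D_I\times D_{(V_1\oplus V_2)}$ living on the target space $U_1\times U_2\times U_2\times V_2\times V_1\times V_2$. Since $\varphi$ duplicates the $U_2$ and $V_2$ slots, its dual $\varphi^*$ adds together the covectors in the duplicated factors. Concretely, writing an element of the big dual as having components $(\alpha_1,\alpha_2,\tilde\alpha_2,\tilde\beta_2,\beta_1,\beta_2)$, the map $\varphi^*$ sends it to $(\alpha_1,\alpha_2+\tilde\alpha_2,\beta_1,\beta_2+\tilde\beta_2)$. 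Applying the definition of $\B\varphi$, a pair $\bigl((u_1,u_2,v_1,v_2),\varphi^*(\cdots)\bigr)$ lies in $\B\varphi(\cdots)$ precisely when $\bigl(\varphi(u_1,u_2,v_1,v_2),(\cdots)\bigr)$ lies in the product Dirac structure, i.e.\ when $(u_1,u_2,\alpha_1,\alpha_2)\in D_{(U_1\oplus U_2)}$, $(u_2,v_2,\tilde\alpha_2,\tilde\beta_2)\in D_I$, and $(v_1,v_2,\beta_1,\beta_2)\in D_{(V_1\oplus V_2)}$, where the duplicated flow components are forced to agree because $\varphi$ repeats $u_2$ and $v_2$.

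Next I would apply the forward operator $\F\psi$, where $\psi$ is the projection onto $U_1\times V_1$. By the definition of $\F\psi$, an element $(u_1,v_1,\lambda_1,\mu_1)\in (U_1\times V_1)\oplus(U_1\times V_1)^*$ lies in $\F\psi(\B\varphi(\cdots))$ exactly when there is a preimage $(u_1,u_2,v_1,v_2)$ with $\bigl((u_1,u_2,v_1,v_2),\psi^*(\lambda_1,\mu_1)\bigr)$ in $\B\varphi(\cdots)$. Since $\psi$ projects away the $U_2$ and $V_2$ factors, its dual $\psi^*(\lambda_1,\mu_1)=(\lambda_1,0,\mu_1,0)$ inserts zeros in the $U_2^*$ and $V_2^*$ slots. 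Combining this with the backward computation, the vanishing of the $U_2^*$ and $V_2^*$ components forces $\alpha_2+\tilde\alpha_2=0$ and $\beta_2+\tilde\beta_2=0$, that is $\tilde\alpha_2=-\alpha_2$ and $\tilde\beta_2=-\beta_2$. Renaming $\lambda_1=\alpha_1$ and $\mu_1=\beta_1$, this yields exactly the condition that $(u_1,u_2,\alpha_1,\alpha_2)\in D_{(U_1\oplus U_2)}$ and $(v_1,v_2,\beta_1,\beta_2)\in D_{(V_1\oplus V_2)}$ for some $u_2,v_2,\alpha_2,\beta_2$, together with $(u_2,v_2,-\alpha_2,-\beta_2)\in D_I$, which is precisely the claimed description.

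The main obstacle I anticipate is bookkeeping rather than conceptual: one must track six vector components and six covector components through two successive operations and be scrupulous about which copies of $U_2$, $V_2$ are being identified by $\varphi$ and which dual slots are being summed by $\varphi^*$ or zeroed by $\psi^*$. The sign conventions require particular care, since the $-\alpha_2$ and $-\beta_2$ appearing in the statement arise exactly from the constraint $\psi^*$ imposes on the duplicated flows through the interconnecting structure $D_I$; getting the orientation of these signs right is where an error would most naturally creep in. It is worth remarking that the fact that the output is a genuine Dirac structure need not be re-proven here, as it is already guaranteed by Lemma~\ref{lem:inclussions} and the observation (made just before the proposition) that the product $D_{(U_1\oplus U_2)}\times D_I\times D_{(V_1\oplus V_2)}$ is itself a Dirac structure, so $\F\psi\circ\B\varphi$ carries it to a Dirac structure under Assumption~1; hence the proof reduces entirely to the set-theoretic identification of the two descriptions.
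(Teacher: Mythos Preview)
Your proposal is correct and follows essentially the same approach as the paper's proof: compute $\varphi^*$ (which sums the duplicated covector slots), describe $\B\varphi$ of the product Dirac structure, compute $\psi^*$ (which inserts zeros), and observe that applying $\F\psi$ forces $\tilde\alpha_2=-\alpha_2$, $\tilde\beta_2=-\beta_2$, yielding the stated description. If anything, your write-up is slightly more explicit than the paper's in the final step, where the paper simply says ``a computation shows''.
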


\begin{proof}
We write $u_i^*,\bar u_i^*\in U_i^*$ and $v^*_i,\bar v_i^*\in V_i^*$ for covectors. The dual of $\varphi$ is the map
\begin{align*}
\varphi^*:U_1^*\times U_2^*\times U_2^*\times V_2^*\times V_2^*\times V_1^*&\to U_1^*\times U_2^*\times V_1^*\times V_2^*,\\
(u_1^*,u_2^*,\bar u_2^*,\bar v_2^*,v_2^*,v_1^*)&\mapsto (u_1^*,u_2^*+\bar u_2^*,v_1^*,v_2^*+\bar v_2^*).
\end{align*}
It follows that $\B \varphi(D_{(U_1\oplus U_2)}\times D_{I} \times D_{(V_1\oplus V_2)})$ has the following description:
\begin{align*}
\{(u_1,u_2,v_1,v_2,u_1^*,u_2^*+\bar u_2^*,&v_1^*,v_2^*+\bar v_2^*) \;\mbox{ such that } (u_1,u_2,u_1^*,u_2^*)\in D_{(U_1\oplus U_2)}, \\
& (u_2,v_2,\bar u_2^*,\bar v_2^*)\in D_I, \mbox{and }  (v_1,v_2,v_1^*,v_2^*)\in D_{(V_1\oplus V_2)}\}.
\end{align*}
On the other hand, the dual of $\psi$ is given by
\[
\psi^*(u_1^*,v_1^*)=(u_1^*,0,v_1^*,0).
\]
A computation shows that $\left(\F \psi\circ \B \varphi\right) (D_{(U_1\oplus U_2)}\times D_I\times D_{(V_1\oplus V_2)})$ agrees with the expression in~Proposition~\ref{pro:comp}.
\end{proof}

\noindent  As a particular case,  one obtains the following result in~\citep{CerVdSBan} (see also~\citep{VdS-Book}):
\begin{corollary} Let $U_1$, $U_2$ and $U_3$ be finite vector spaces and $D_a$, $D_b$ be Dirac structures in $U_1\times U_2$ and $U_3\times U_2$, respectively. Let $D_I$ be the following Dirac structure on $U_2\times U_2$:
\[
D_I=\{(u_2,\hat u_2,\alpha_2,\hat \alpha_2)\in (U_2\times U_2)\oplus (U_2\times U_2)^*\st \hat u_2=-u_2, \hat \alpha_2=\alpha_2\}.
\]
Then
\begin{align*}
{D_a}\|_{D_{I}}D_b=&\{(u_1,u_3,\alpha_1,\alpha_3)\st  \text{there exists } \; (u_2,\alpha_2)\in U_2\oplus U_2^* \\
 &\text{such that}\;\; (u_1,u_2,\alpha_1,\alpha_2)\in D_a, (u_3,-u_2,\alpha_3,\alpha_2)\in D_b\}.
\end{align*}
\end{corollary}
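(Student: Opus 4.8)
The plan is to read off the statement as the specialization of Proposition~\ref{pro:comp} obtained by taking $V_1=U_3$ and $V_2=U_2$, with $D_{(U_1\oplus U_2)}=D_a$ and $D_{(V_1\oplus V_2)}=D_b$. Under this dictionary the general composition formula already established in Proposition~\ref{pro:comp} gives that $D_a\|_{D_I}D_b$ consists of all $(u_1,u_3,\alpha_1,\alpha_3)$ for which there exist a flow $v_2\in U_2$ and an effort $\beta_2\in U_2^*$ with $(u_2,v_2,-\alpha_2,-\beta_2)\in D_I$, $(u_1,u_2,\alpha_1,\alpha_2)\in D_a$ and $(u_3,v_2,\alpha_3,\beta_2)\in D_b$ (where I write $v_1=u_3$ and $\beta_1=\alpha_3$). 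So the only work left is to unfold the membership condition for the concrete $D_I$ given in the statement.

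First I would expand $(u_2,v_2,-\alpha_2,-\beta_2)\in D_I$. Some care is needed here because the subscript $2$ is overloaded: in the definition of $D_I$ the symbols $(u_2,\hat u_2,\alpha_2,\hat\alpha_2)$ are bound variables denoting the two flow slots and the two effort slots, and these must be matched against the corresponding entries of the tuple coming from Proposition~\ref{pro:comp}. Matching the flow slots identifies $\hat u_2$ with $v_2$ (and the first slot with the ambient $u_2$), so the defining relation $\hat u_2=-u_2$ of $D_I$ becomes $v_2=-u_2$. Matching the effort slots identifies $\hat\alpha_2$ with $-\beta_2$ and the first effort slot of $D_I$ with $-\alpha_2$, so the relation $\hat\alpha_2=\alpha_2$ becomes $-\beta_2=-\alpha_2$, that is, $\beta_2=\alpha_2$.

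With these two identities I would substitute $v_2=-u_2$ and $\beta_2=\alpha_2$ back into the two remaining incidence conditions. The condition on $D_a$ is unchanged, $(u_1,u_2,\alpha_1,\alpha_2)\in D_a$, while the condition on $D_b$ becomes $(u_3,-u_2,\alpha_3,\alpha_2)\in D_b$. This eliminates the auxiliary variables $v_2$ and $\beta_2$, so the existential quantifier collapses to the existence of a single pair $(u_2,\alpha_2)\in U_2\oplus U_2^*$ subject to these two membership relations, which is exactly the claimed description of $D_a\|_{D_I}D_b$.

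I do not expect any genuine obstacle: the result is a pure specialization of Proposition~\ref{pro:comp}, and all that is required is disciplined bookkeeping of the sign conventions built into $D_I$ together with the overloaded use of the subscript $2$. The one place to stay attentive is precisely the sign matching in the effort slots, where the minus signs appearing in the tuple $(u_2,v_2,-\alpha_2,-\beta_2)$ of Proposition~\ref{pro:comp} interact with the relation $\hat\alpha_2=\alpha_2$ to yield $\beta_2=\alpha_2$ rather than $\beta_2=-\alpha_2$; getting this right is what makes the second effort appear with the correct sign in the final expression.
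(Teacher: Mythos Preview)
Your proposal is correct and follows exactly the paper's approach: the corollary is stated as a particular case of Proposition~\ref{pro:comp}, and the paper gives no further argument beyond this. Your careful bookkeeping of the overloaded subscript and the sign conventions in the effort slots is precisely what is needed to carry out the specialization, and your conclusion matches the claimed description.
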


The composition of Dirac structures in vector spaces can be extended to the case of Dirac structures on vector bundles. In the case of two \textsc{ph}-systems, the composition can be used to interconnect them. Consider two \textsc{ph}-structures $A$ and $B$, where
\[
A=(\pi_{(U_1=TM,M)},\pi_{(U_2,M)},D_{(U_1\oplus U_2)}), \quad B=(\pi_{(V_1=TN,N)},\pi_{(V_2,N)},D_{(V_1\oplus V_2)}).
\]
Given a Dirac structure $D_I$ on the vector bundle $U_2\times V_2\to M\times N$, we define the composition of $A$ and $B$ via $D_I$ fiberwise, using the construction above for the case of vector spaces. That is, at each point, we have
\[
D_{(U_1\times V_1)}(m,n)=D_{(U_1\oplus U_2)}(m)\|_{D_{I}(m,n)}D_{(V_1\oplus V_2)}(n).
\]
The proof that $D_{(U_1\times V_1)}$ defines a Dirac structure on the vector bundle $U_1\times V_1\to M\times N$ can be done mimicking the construction of Definition~\ref{def:composition} in the case of vector bundles. We omit the details here.

Finally, if have two \textsc{ph}-systems with energy functions $E_1\colon M\to \R$ and $E_2\colon N\to \R$, the composition $D_{T(M\times N)}$ leads naturally to a Dirac system on the space $T(M\times N)$
\[
\dot x\oplus {\rm d}(E_1+E_2)\in D_{T(M\times N)},
\]
where $\dot x\in T(M\times N)$, $E_1+E_2\colon M\times N\rightarrow \mathbb{R}$ is defined by $(E_1+E_2)(m,n)=E_1(m)+E_2(n)$.

\begin{example} Take two generalized port-controlled Hamiltonian systems 
\begin{equation*}
\begin{aligned}[c]
\dot x &= J(x)\frac{\partial E}{\partial x}(x)+g(x)f,\\
e &= g^T(x)\frac{\partial E}{\partial x}(x),
\end{aligned}
\qquad\qquad
\begin{aligned}[c]
\dot{\overline{x}} &= \overline{J}(\overline{x})\frac{\partial \overline{E}}{\partial \overline{x}}(\overline{x})+g(\overline{x})\overline{f},\\
\overline{e} &= \overline{g}^T(\overline{x})\frac{\partial \overline{E}}{\partial \overline{x}}(\overline{x}).
\end{aligned}
\end{equation*}
The notation is the same as in Example~\ref{ex:VdS_open1}, in particular we have $M=N=\R^n$, $U_1=\overline{U_1}=T\R^n$ and $U_2=\overline{U_2}=\R^n\times \R^m$. We consider the Dirac structure 
\[
D_I=\{(f,\overline{f},e,\overline{f})\st f=-\overline{f},\;e=\overline{e}\}\subset     (U_2\times U_2)\oplus (U_2\times U_2)^*.
\]
The composition of these two systems via $D_I$ leads to the following Dirac system on $U_1\times V_1\to M\times N$:
\[
\dot x = J(x)\frac{\partial E}{\partial x}(x)+g(x)\lambda,\quad \dot{\overline{x}} = \overline{J}(\overline{x})\frac{\partial \overline{E}}{\partial \overline{x}}(\overline{x})-g(\overline{x})\lambda,\quad g^T(x)\frac{\partial E}{\partial x}(x)=\overline{g}^T(\overline{x})\frac{\partial \overline{E}}{\partial \overline{x}}(\overline{x}).
\]
Note that $\lambda$ is regarded as a multiplier. 
\end{example}
\vspace{.3cm}

\begin{remark} In the literature, the interconnection of multiple (not necessarily two)  Dirac structures by means of an interconnectiong Dirac structure has been explored in~\citep{Generalcomposition}. Similar to the construction above, it is possible to define a composition of $N$ \textsc{ph}-systems in terms of the forward and backward operators which recovers the results in this reference. Since we will describe an alternative geometric approach in the next section, we omit the details. See also Remark~\ref{rem:1} later.
\end{remark}

\section{A geometric framework for Input-Output systems}\label{sec:InputOutput}

In the previous sections we have argued that Dirac structures provide a unified framework to describe a wide class of systems. In particular, we have shown that closed or interconnected systems are typically represented by Dirac systems on the tangent bundle of a manifold, while open systems can be described by the notion of \textsc{ph}-systems. The aim of this section is to introduce a geometric framework which encompasses both closed and open systems, and in terms of which the interconnection of systems will be defined. The construction is based on the observation that open systems can be modeled using a generalization of Dirac systems where the subbundle defining the dynamics is merely coisotropic (rather than both coisotropic and isotropic, as in the Dirac case).

We will explicitly show in this section that the proposed formalism recovers numerous examples in the literature of Dirac structures and port-Hamiltonian systems, but also systems which have not been described in this way before. The forward and backward operators preserve the geometric structures and properties of the initial systems before interconnection, in particular, Dirac structure after composition operations. We plan to use this geometric construction to implement geometric preserving integrators for interconnected Dirac systems.

\subsection{Forward input-output port-Hamiltonian systems}\label{subsec:FPDS}

We start with the notion of \emph{forward input-output (port-Hamiltonian) system}, and the related notion of \emph{open forward input-output (port-Hamiltonian) system}. In the next subsection, we will describe the dual notions of \emph{backward input-output (port-Hamiltonian) system} and  \emph{open backward input-output (port-Hamiltonian) system}.

\begin{definition}\normalfont A \emph{forward input-output structure} (\textsc{fio}-structure) is a 5-uple
\begin{equation}\label{eq:fios_A}
A=\left(\pi_{(U_1,M)},\pi_{(U_2,M)},D_{U_1},D_{U_2},g_{U_2U_1}\right)
\end{equation}
where $\pi_{(U_i,M)}\colon U_i \to M,\,\,i=1,2,$ is a vector bundle, $D_{U_i}$ is a Dirac structure on $U_i,\,\,i=1,2,$ and $g_{U_2U_1}\colon U_2\to U_1$ is a vector bundle map over the identity $1_M:M\to M$.
\end{definition}

We will call \emph{open forward input-output structure} objects obtained by replacing in the definition of a \textsc{fio}-structure $A$ the Dirac structure $D_{U_2}$ by the coisotropic structure $U_2 \oplus U_2^*$. More precisely:

\begin{definition}\normalfont An \emph{open forward input-output structure} (\textsc{ofio}-structure) is a 5-uple
\[
A=\left(\pi_{(U_1,M)},\,\pi_{(U_2,M)},\,D_{U_1},\,U_2 \oplus U_2^*,\,g_{U_2U_1}\right),
\]
where $D_{U_1}$ is a Dirac structure on $U_1$ and $g_{U_2U_1}\colon U_2\to U_1$ is a vector bundle map over the identity.
\end{definition}

Note that the following diagram is commutative:
\begin{figure}[H]
\centering
\includegraphics{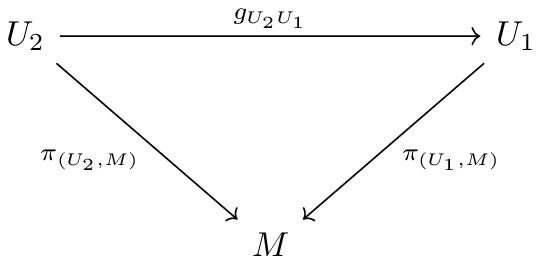}
\end{figure}

Given a \textsc{fio}-structure $A$ as in~\eqref{eq:fios_A}, we associate to it the following Dirac structure on $U_1$:
\begin{equation}\label{eq:D_A}
D_A=\F (\Phi_A)\left(D_{U_1} \oplus D_{U_2}\right),
\end{equation}
where $\Phi_A\colon U_1 \oplus U_2 \to U_1$ is the surjective vector bundle map over $1_M$ given by
\begin{equation}\label{eq:Phi_A}
\Phi_A(u_1 \oplus u_2)=u_1 + g_{U_2U_1}(u_2).
\end{equation}
Explicitly, $D_A$ is the set of all $(u_1,\alpha_1) \in U_1 \oplus U_1^*$ such that there exists $(u_2,\alpha_2) \in U_2 \oplus U_2^*$ with
\begin{equation}\label{eq:fio_explicit}
\left.
\begin{aligned}
(u_1-g_{U_2U_1}(u_2),\alpha_1) & \in D_{U_1}, \\
(u_2,\alpha_2) & \in D_{U_2}, \\
g^*_{U_2U_1}(\alpha_1)& = \alpha_2.
\end{aligned}
\; \right\}
\end{equation}

In the same way, given an \textsc{ofio}-structure $A$, we associate to it the following coisotropic structure on $U_1$:
\begin{equation}\label{eq:Sigma_A}
\Sigma_A=\F (\Phi_A)\left(D_{U_1} \oplus (U_2 \oplus U^*_2)\right)
\end{equation}
where the map $\Phi_A$ is given by~\eqref{eq:Phi_A}. The fact that $\Sigma_A$ defines a coisotropic structure follows directly from the fact that $D_{U_1}\oplus (U_2 \oplus U_2^*)$ is a coisotropic structure on $U_1\oplus U_2$ (see Lemma~\ref{lem:inclussions}). We deduce that $\Sigma_A$ is the set of all $(u_1,\alpha_1) \in U_1 \oplus U_1^*$ such
that there exists $(u_2,\alpha_2) \in U_2 \oplus U_2^*$ with
\begin{equation}\label{eq:ofio_explicit}
\left.
\begin{aligned}
(u_1-g_{U_2U_1}(u_2),\alpha_1) & \in D_{U_1}, \\
g^*_{U_2U_1}(\alpha_1)& = \alpha_2,
\end{aligned}
\; \right\}
\end{equation}
or equivalently, since $(u_2, \alpha_2) \in U_2 \oplus U_2^*$ is arbitrary,
\begin{equation*}
(u_1-g_{U_2U_1}(u_2),\alpha_1) \in D_{U_1}.
\end{equation*}

\begin{example}\label{ex:Bivectorforward}
Consider the particular case in which $D_{U_1}$ is the Dirac structure associated to a bivector
$\Lambda$ on $U_1$ and $D_{U_2} = U_2 \oplus \{0\}$. Then the system~\eqref{eq:fio_explicit} reads
\begin{align*}
u_1-g_{U_2U_1}(u_2) &= \sharp_\Lambda (\alpha_1),\\
g^*_{U_2U_1}(\alpha_1) & = 0,
\end{align*}
or, equivalently,
\begin{align*}
u_1 - \sharp_\Lambda (\alpha_1) & \in \operatorname{Im}g_{U_2U_1}, \\
\alpha_1 &\in (\operatorname{Im}g_{U_2U_1})^\circ.
\end{align*}
From this we can deduce that $F^{(*)}_{D_A} = (\operatorname{Im}\,g_{U_2U_1})^\circ\subset U_1^*$ and that $D_A$ is the Dirac structure on $U_1$ determined by the codistribution
$(\operatorname{Im}\,g_{U_2U_1})^\circ$ and the restriction of $\Lambda$ to $(\operatorname{Im}\,g_{U_2U_1})^\circ$. Note that from the equations above it follows that $F_{D_A} = \sharp_\Lambda (\operatorname{Im}\,g_{U_2U_1})^\circ + \operatorname{Im}\,g_{U_2U_1}$.

A partial converse for this example can be easily proven under regularity conditions. Given any Dirac structure $D$ on $U_1$
such that $F^{(*)}_D$ is a subbundle of $U_1^*$, there exists a bivector $\Lambda$ on $U_1$, a vector bundle $U_2$ and a vector bundle map $g_{U_2U_1}\colon  U_2 \to U_1$ such that for the \textsc{fio}-structure
\[
A=(\pi_{(U_1,M)},\pi_{(U_2,M)},D_{U_1}=D_\Lambda,D_{U_2}=U_2\oplus \{0\},g_{U_2U_1})
\]
we have $D_A=D$. In order to achieve this, first we must take $U_2$ and $g_{U_2U_1}$ in such a way that the condition
$F^{(*)}_D = (\operatorname{Im}\,g_{U_2U_1})^\circ$ is satisfied. By assumption, $F^{(*)}_D$ is a subbundle, so there exists a bivector $\tilde\Lambda : F^{(*)}_D \times F^{(*)}_D \to \mathbb{R}$ on $F_D$ defining the Dirac structure $D$, which we can extend to a bivector $\Lambda \colon  U_1^* \times U_1^* \to \mathbb{R}$. The fact that $D_A=D$ is easy to check.
\end{example}
\vspace{.3cm}

The most important case of an \textsc{fio}-structure (or \textsc{ofio}-structure) $A$ occurs when $U_1 = TM$. For a given energy function $E\colon M\to \mathbb{R}$ and using~\eqref{eq:D_A} and~\eqref{eq:Sigma_A}, we will define the dynamics by the following equations:
\begin{align*}
(x,\dot x) \oplus {\rm d}E(x) \in D_A, \qquad &\mbox{if $A$ is a \textsc{fio}-structure}.\\
(x,\dot x) \oplus {\rm d}E(x) \in \Sigma_A, \qquad &\mbox{if $A$ is an \textsc{ofio}-structure}.
\end{align*}
In the case of \textsc{fio}-structure~\eqref{eq:fio_explicit}, these equations become
\begin{align*}
((x,\dot x)-g_{U_2U_1}(u_2),{\rm d}E(x)) & \in D_{U_1}, \\
(u_2,\alpha_2) & \in D_{U_2}, \\
g^*_{U_2U_1}({\rm d}E(x))& = \alpha_2.
\end{align*}
Similarly, for the case of an \textsc{ofio}-structure~\eqref{eq:ofio_explicit} the equations of motion are:
\begin{align*}
((x,\dot x)-g_{U_2U_1}(u_2),{\rm d}E(x)) & \in D_{U_1},\\
g^*_{U_2U_1}{\rm d}E(x)& = \alpha_2.
\end{align*}
Note that in the case of an \textsc{ofio}-structure, the second equation $g^*_{U_2U_1}{\rm d}E(x) = \alpha_2$ does not impose any restriction on the dynamics of the state variable $x$. It does, however, have some physical meaning in concrete examples as we will give an idea in the examples below.

If $A$ is a \textsc{fio}-structure and $E$ is an energy function, the system $(x,\dot x) \oplus {\rm d}E(x) \in D_A$ is the Dirac system naturally associated to the \textsc{fio}-structure $A$ and the energy function $E$. It will be called a \emph{forward input-output system} (\textsc{fio}-system). It is important to observe that the system $(x,\dot x) \oplus {\rm d}E(x) \in \Sigma_A$ (which occurs in case $A$ is an \textsc{ofio}-structure) is not a Dirac system but a \emph{coisotropic system} since $\Sigma_A$ is a coisotropic structure.

\begin{example} In the case $g_{U_2U_1}=0$, the equations of an~\textsc{fio}-system become
\[
((x,\dot x),{\rm d}E(x)) \in D_{U_1}.
\]
Therefore the theory of Dirac systems is contained in the theory of \textsc{fio}-systems.
\end{example}
\vspace{.3cm}

\begin{example}\label{ex:VdS_open} Let $A$ be the \textsc{ofio}-structure with $D_{U_1}$ given by the graph of a Poisson bivector $\Lambda$ on $M$. Then the dynamics is governed by the equation
\[
(x, \dot{x})= \sharp_\Lambda ({\rm d}E (x)) + g_{U_2U_1}(u_2)
\]
which is simply a system with control parameters $u_2$. The interpretation is that of a system with ``open ports'' which might be used to model the various interactions of the system. The terminology ``\textsc{ofio}-system'' is motivated by this observation.

The equations of motion of an \textsc{ofio}-system should be compared with those of a ``port-controlled generalized Hamiltonian system'' that we have already discussed in the realm of \textsc{ph}-systems (Example~\ref{eq:PortControlled}):
\begin{align*}
\dot x &= J(x)\frac{\partial E}{\partial x}(x)+g(x)f,\\
e &= g^T(x)\frac{\partial E}{\partial x}(x).
\end{align*}
Here $J$ plays the role of $\sharp_\Lambda$, $f$ stands for the flows (inputs) of the system and $E=H$ is the Hamiltonian. The term $e$ represents the efforts (outputs) of the system as it evolves according to the equation $\dot x = J(\partial E/\partial x)+gf$. This is one possible physical meaning for the ``non-dynamical equation'' $g^*_{U_2U_1}{\rm d}E(x) = \alpha_2$ discussed above.
\end{example}
\vspace{.3cm}

\begin{example}\label{ex:VdS_closed} Let $A$ be the \textsc{fio}-structure with $D_{U_1}$ given by the graph of a Poisson bivector $\Lambda$ on $M$ and $D_{U_2}=U_2\oplus \{0\}$. Then the dynamics is described by
\begin{align*}
(x, \dot{x})
&= \sharp_\Lambda ({\rm d}E (x)) + g_{U_2U_1}(u_2),\\
g^*_{U_2U_1}({\rm d}E(x))& = 0,
\end{align*}
which is a differential-algebraic equation (DAE).

As a particular case, we obtain, in the language of~\citep{DVdS}, the so-called ``representation II'' of the generalized Hamiltonian system with Hamiltonian $H=E$. In~\citep{DVdS}, the equations read (see page 64)
\begin{align*}
\dot x &= J(x)\frac{\partial E}{\partial x}(x)+g(x)f,\\
0 &= g^T(x)\frac{\partial E}{\partial x}(x).
\end{align*}
See the paragraph after~\eqref{eq:PortControlled2} in Example~\ref{ex:VdS_open1} for more details.
\end{example}
\vspace{.3cm}

\begin{example} Consider a spring pendulum on the plane which is attached to a fixed peg $O$ as depicted in Diagram~\ref{dia:pendulo}. Assume that a fixed force $F\in \R^2$ acts on the mass $m$. We will show how to describe this system as an \textsc{ofio}-system.

\begin{figure}[H]
\centering
\includegraphics{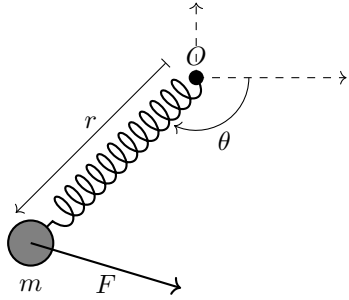}
\caption{A force $F$ acting on a spring pendulum}\label{dia:pendulo}
\end{figure}

We take polar coordinates $q=(r,\theta)$ as shown in Diagram~\ref{dia:pendulo}, where the angle $\theta$ is measured with respect to a fixed frame at $O$. The configuration space is $T(\R\times S^1)=TQ$. The Lagrangian for the spring pendulum is the function on $T(\R\times S^1)$ given by:
\[
L(q,v)=\frac{1}{2}m(v_r^2+r^2v_\theta^2)-\frac{1}{2}k(r-r_0)^2,
\]
where $k$ is the constant of the spring and $r_0$ denotes its natural length. We have the following energy function on $M=TQ\oplus T^*Q$:
\[
E(q,v,p)=pv-L(q,v)=p_\theta v_\theta + p_r v_r - \frac{1}{2}m(v_r^2+r^2v_\theta^2)+\frac{1}{2}k(r-r_0)^2,
\]
We denote by $D_1$ the usual Dirac structure on $TM$ obtained by pulling back $\omega_Q$ to $M$. We let $g_A:M\times\R^2\to TM$ be the vector bundle map:
\[
g_A(q,v,p,F)=(q,v,p,0,0,F),
\]
where $F\in\R^2$. If we define the following \textsc{ofio}-structure
\[
A=\left(TM,M\times \R^2,D_1,(M\times \R^2)\oplus (M\times \R^2)^*,g_A\right),
\]
the equations of motion corresponding to $A$ and the energy $E$ are (recall that $x=(q,v,p)$):
\begin{align*}
(\dot x-g_A(F),{\rm d}E(x)) & \in D_1,\\
g_A^*({\rm d}E(x))& = \alpha,
\end{align*}
where $(F,\alpha)\in \R^2\times (\R^2)^*$. If we write $F=(F_r,F_\theta)$, the dynamical equation $(\dot x-g_A(F),{\rm d}E(x)) \in D_1$ represents the dynamics of the spring pendulum acted by a force with components $F_r$ and $F_\theta$ along the radial ($r$) and angular ($\theta$) directions. For instance, it is immediate to check that the evolution equations for the momenta $p_r$ and $p_\theta$ are:
\[
\dot p_r - F_r= \frac{\partial L}{\partial r},\qquad \dot p_\theta - F_\theta= \frac{\partial L}{\partial \theta}.
\]
\end{example}
\vspace{.3cm}

\begin{example}[\textbf{Nonholonomic mechanics}]\label{ex:nonh-ex} Let $\D\subset TQ$ be a vector subbundle and consider the vector bundle $U_2=T^*Q\oplus \D^\circ$ over $T^*Q$. We define the vector bundle morphism $g_A\colon U_2\to TT^*Q$ over the identity in $T^*Q$ by:
\[
g_A(\alpha_q, \mu_q)=\frac{d}{dt}\Big|_{t=0}(\alpha_q+t \mu_q)\in T_{\alpha_q}T^*Q,
\]
where $\alpha_q\in T^*Q$, $\mu_q\in \D^\circ$. Observe that by construction $g(\alpha_q, \mu_q)$ is a vertical element, i.e. $g_A(\alpha_q, \mu_q)\in V_{\alpha_q}T^*Q$. Given a basis of sections $\{\mu^{a}\}$ of $\D^\circ\rightarrow Q$, we have the following coordinate representation of the mapping $g_A$:
\[
g_A(q^i, p_i,\lambda_a)=(q^i, p_i, 0, \lambda_a \mu^a_i(q))\equiv \lambda_a \mu^a_i(q)\frac{\partial}{\partial p_i}.
\]
We consider the \textsc{fio}-structure
\[
A=(TT^*Q,U_2,D_1,D_{U_2},g_A),
\]
with $D_1\subset TT^*Q\oplus T^*T^*Q$ the usual Dirac structure induced by the graph of the canonical symplectic form $\omega_Q$ and $D_{U_2}=U_2\oplus 0$. For a Hamiltonian $H\colon T^*Q\to\R$ the associated \textsc{fio}-system gives the following system of equations:
\begin{align*}
\frac{d{q}^i}{dt}&=\frac{\partial H}{\partial  p_i},\\
 \frac{d{p}_i}{dt}&=-\frac{\partial H}{\partial  q_i}+\lambda_a \mu^a_i(q), \\
 0&=\mu^a_i(q)\frac{\partial H}{\partial  p_i}.
\end{align*}
These are precisely the equations of a nonholonomic system defined by a Hamiltonian   $H\colon T^*Q\to\R$ and a nonholonomic distribution $\D$. See e.g.~\citep{Marle} for more details. 

This shows that the interconection of different Dirac structures may lead to nonholonomic constraints and their corresponding reaction forces. 
\end{example}

\vspace{.3cm}

% \corr{\begin{example}[\textbf{Nonholonomic mechanics on algebroids}]\label{ex:nonhalg-ex} A very similar setting recovers the equations of motion of Example~\ref{ex:motivation} at the beginning of this section. With the notations there, we take $U_2=A^*\oplus \D^\circ$ and define $g\colon U_2\to TA^*$ via the fiber derivative. We choose $U_1=A^*$ and let $D_1\subset TA^*\oplus T^*A^*$ be the Dirac structure obtained with the pullback of the linear Poisson structure induced by the algebroid structure.  A computation analogous to that of Example~\ref{ex:nonh-ex} leads to the equations of motion? 
% \end{example}}
% 
% \cosa{Eso creía pero es mentira :), no es tan analogous. En realidad si uno mira las ecs. de David aparecen $(q^i,y^\alpha,p_\alpha)$ o sea que no puede ser una formulación anterior que solo te da ecs. en $(q^i,p_\alpha)$. Creo que lo que hay que hacer es lo siguiente: considerar como $M=A\oplus A^*$ y como estructura de Dirac en $TM\oplus T^*M$ el dado por el pullback de la estructura de Dirac en $A^*$ dada por Poisson; si eso es cierto, es bastante más engorroso. No he hecho las cuentas.
% 
% Si tuviéramos las ecs Hamiltonianas (que como te digo no encuentro en la literatura, deberían de ser las no-holónomas en algebroides Hamiltonianas) entonces creo que estaría chupado porque sería completamente análogo al ejemplo anterior. ¿Sabes si están escritas explíctamente en algún lado? Yo con google no encuentro lo que busco, deberían de ser los análogos del ejemplo 4.9 en algebroides}
% \vspace{.3cm}

\subsection{Backward input-output port-Hamiltonian systems}\label{subsec:BPDS}
We now turn to the notions of \emph{backward input-output (port-Hamiltonian) systems} and \emph{open-backward (port-Hamiltonian) systems}. The definitions and results are very similar to those of the forward case. Statements related to ``backward'' are dual from those related to ``forward'' and could be obtained one from each other directly, but we will describe them separately in detail because the concrete expressions that appear in each case are useful in particular examples and help to relate the results to those in the literature. See Appendix~\ref{ap:B} for more details.

\begin{definition}\normalfont A \emph{backward input-output structure} (\textsc{bio}-structure)  is a 5-uple
\begin{equation*}\label{eq:cbpds_A}
A=\left(\pi_{(U_1,M)},\pi_{(U_2,M)},D_{U_1},D_{U_2},p_{U_1U_2}\right)
\end{equation*}
where $\pi_{(U_i,M)}\colon U_i \to M,\,\,i=1,2,$ is a vector bundle, $D_{U_i}$ is a Dirac structure on $U_i,\,\,i=1,2,$ and $p_{U_1U_2}\colon U_1\to U_2$ is a vector bundle map over the identity $1_M$.
\end{definition}

\begin{definition}\normalfont An \emph{open backward input-output structure} (\textsc{obio}-structure) is a 5-uple
\[
A=\left(\pi_{(U_1,M)},\,\pi_{(U_2,M)},\,D_{U_1},\,U_2 \oplus U_2^*,\,p_{U_1U_2}\right).
\]
where $D_{U_1}$ is a Dirac structure on $U_1$ and $p_{U_1U_2}\colon U_1\to U_2$ is a vector bundle map over the identity.
\end{definition}

We will simply write \textsc{bio}-structure or \textsc{obio}-structure. The following diagram is commutative:
\begin{figure}[H]
\centering
\includegraphics{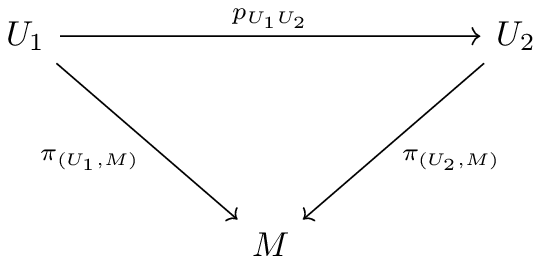}
\end{figure}

Given a \textsc{bio}-structure $A$, we associate to it the following Dirac structure on $U_1$:
\begin{equation}\label{eq:D_A_back}
D_A=\B (\Psi_A)\left(D_{U_1} \oplus D_{U_2}\right),
\end{equation}
where $\Psi_A\colon U_1 \to U_1\oplus U_2 $ is the injective vector bundle map over $1_M$ given by $\Psi_A(u_1)=u_1\oplus p_{U_1U_2}(u_1)$. Note that $D_A$ is the set of all $(u_1,\alpha_1) \in U_1 \oplus U_1^*$ such that there exists $(u_2,\alpha_2) \in U_2 \oplus U_2^*$ with
\begin{equation}\label{eq:bio_explicit}
\left.
\begin{aligned}
(u_1,\alpha_1-p_{U_1U_2}^*(\alpha_2)) & \in D_{U_1}, \\
(u_2,\alpha_2) & \in D_{U_2}, \\
p_{U_1U_2}(u_1)& = u_2.
\end{aligned}
\; \right\}
\end{equation}

Likewise, associated to an \textsc{obio}-structure $A$ we consider the coisotropic structure on $U_2$ given by:
\begin{equation}\label{eq:Sigma_A_back}
\Sigma_A=\B (\Psi_A)\left(D_{U_1} \oplus (U_2 \oplus U^*_2)\right)
\end{equation}
where  the map $\Psi_A\colon  U_1\to U_1\oplus U_2$ is defined as above. One checks that $\Sigma_A$ is the set of all $(u_1,\alpha_1) \in U_1 \oplus U_1^*$ such
that there exists $(u_2,\alpha_2) \in U_2 \oplus U_2^*$ with
\begin{equation}\label{eq:obio_explicit}
\left.
\begin{aligned}
(u_1,\alpha_1-p_{U_1U_2}^*(\alpha_2)) & \in D_{U_1}, \\
p_{U_1U_2}(u_1)& = u_2,
\end{aligned}
\; \right\}
\end{equation}
or, taking into account that $(u_2, \alpha_2) \in U_2 \oplus U_2^*$ is arbitrary,
\begin{equation*}
(u_1,\alpha_1-p_{U_1U_2}^*(\alpha_2)) \in D_{U_1}.
\end{equation*}

\begin{example} This example is the dual of Example~\ref{ex:Bivectorforward}. Let $D_{U_1}$ be the Dirac structure associated to a 2-form $\omega$ on $U_1$ and $D_{U_2} = \{0\} \oplus U^*_2$. The equations~\eqref{eq:bio_explicit} are
\begin{align*}
\omega^\flat (u_1) &= \alpha_1-p_{U_1U_2}^*(\alpha_2),\\
p_{U_1 U_2}(u_1)& = 0,
\end{align*}
or, using that $(\operatorname{Im}\,p_{U_1U_2}^*)^\circ = \ker\,p_{U_1U_2}$,
\begin{align*}
\omega^\flat (u_1) - \alpha_1 & \in (\ker\,p_{U_1U_2})^\circ, \\
u_1 &\in \ker\,p_{U_1U_2}.
\end{align*}

This implies that $F_{D_A} = \ker\,p_{U_1U_2}$. Then  $D_A$ is the Dirac structure on $U_1$ determined by the distribution
$\ker\,p_{U_1U_2}$ and the restriction of $\omega$ to a 2-form on
it. We also see that $F^{(*)}_{D_A} = \omega^\flat (\ker\,p_{U_1U_2}) + (\ker\,p_{U_1U_2})^\circ$. It is possible to prove a partial converse in the same way as in Example~\ref{ex:Bivectorforward}.
\end{example}
\vspace{.3cm}

\begin{example}[\textbf{Tensor product}] Consider a \textsc{bio}-structure $A$ with $U_1=U_2=TM$, and we denote the Dirac structures by $D_1$ and $D_2$, respectively. Take the map $p_{U_1U_2}:TM\to TM$ to be the identity. Then~\eqref{eq:bio_explicit} becomes
\begin{equation*}
\left.
\begin{aligned}
(u_1,\alpha_1-\alpha_2) & \in D_{1}, \\
(u_1,\alpha_2) & \in D_{2},
\end{aligned}
\; \right\}
\end{equation*}
This coincides with the so-called tensor product $D_1\boxtimes D_2$ of the Dirac structures $D_1$ and $D_2$ (see~\citep{Gualtieri}), which can be alternatively obtained as follows. Consider the diagonal embedding $d:M\to M\times M$, then:
\[
D_1\boxtimes D_2=\B(Td)(D_1\times D_2)\subset TM\oplus T^*M.
\]
This construction is also known by some authors as the bowtie product of $D_1$ and $D_2$ (denoted $D_1\bowtie D_2$). For applications of the tensor product in the interconnection of Dirac structures, we refer to~\citep{JacYo} and references therein.
\end{example}
\vspace{.3cm}

Let $A$ be an \textsc{bio}-structure or \textsc{obio}-structure with $U_1 = TM$. If $E\colon M\to \mathbb{R}$ is an energy function, and using~\eqref{eq:D_A_back} and~\eqref{eq:Sigma_A_back} the dynamics is given by:
\begin{align*}
(x,\dot x) \oplus {\rm d}E(x) \in D_A, \qquad &\mbox{if $A$ is a \textsc{bio}-structure}.\\
(x,\dot x) \oplus {\rm d}E(x) \in \Sigma_A, \qquad &\mbox{if $A$ is an \textsc{obio}-structure}.
\end{align*}
The equations of motion in the case of a \textsc{bio}-structure~\eqref{eq:bio_explicit} are
\begin{align*}
((x, \dot{x}),{\rm d}E(x)-p_{U_1U_2}^*(\alpha_2)) & \in D_{U_1}, \\
(u_2, \alpha_2) & \in D_{U_2},\\
p_{U_1 U_2}(x, \dot{x})& = u_2.
\end{align*}
For an \textsc{obio}-structure~\eqref{eq:obio_explicit} one obtains:
\begin{align*}
((x, \dot{x}),{\rm d}E(x)-p_{U_1U_2}^* (\alpha_2))& \in D_{U_1}, \\
p_{U_1 U_2}(x, \dot{x})& = u_2.
\end{align*}
We observe that in this case the equation $p_{U_1 U_2}(x, \dot{x}) = u_2$ does not impose any restriction on the dynamics.

The system $(x,\dot x) \oplus {\rm d}E(x) \in D_A$ is the Dirac system associated to the \textsc{bio}-structure $A$ and the energy function $E$. It will be called a \emph{backward input-output system} (\textsc{bio}-system). The system $(x,\dot x) \oplus {\rm d}E(x) \in \Sigma_A$ (when $A$ is an \textsc{ofio}-structure) is a coisotropic system.

\begin{example} Let $A$ be the \textsc{bio}-structure with $U_1=TM$ and $D_{U_1}$ given by the graph of a presymplectic form $\omega$ on $M$, and $D_{U_2}=\{0\}\oplus U_2^*$. The equations are:
\begin{align*}
\omega^\flat(\dot{x}) &= {\rm d}E(x)-p_{U_1U_2}^*(\alpha_2),\\
p_{U_1U_2}(x, \dot{x})& = 0,
\end{align*}
which is a DAE. This is the so-called ``representation III'' on~\citep{DVdS}.
\end{example}
\vspace{.3cm}

\subsection{Interconnection of input-output port-Hamiltonian systems}\label{subsec:Inter} In the previous sections we have shown that an \textsc{ofio}/\textsc{obio}-system serves as a model for a dynamical system with open ports, i.e. a system for which interaction with other systems is possible.  We will now describe how, given an interconnecting Dirac structure, it is possible to connect a family of \textsc{ofio}-systems (\textsc{obio}-systems) through the ports in such a way that the resulting system is a \textsc{bio}-system (\textsc{fio}-system) which represents the dynamics of the interconnected system. We will do this via some illustrative examples that have been considered in the literature with different methods~\citep{VdSM2,DVdS}.

\paragraph{Interconnection of \textsc{ofio}-systems.} We start with the forward case. Let
\begin{equation*}
A_i=(\pi_{(U_{1,i},M_i)},\,\pi_{(U_{2,i},M_i)},\,D_{U_{1,i}},\,U_{2,i} \oplus U_{2,i}^*,\,g_{U_{2,i}U_{1,i}}),\quad i = 1,\dots,N,
\end{equation*}
be a family of \textsc{ofio}-structures. If we consider the product manifold $M=M_1\times M_2\times \dots\times M_N$, we can define vector bundles $U_1\to M$ and $U_2\to M$ by
\[
U_1=U_{1,1}\times U_{1,2}\times...\times U_{1,N},\qquad U_2=U_{2,1}\times U_{2,2}\times...\times U_{2,N}.
\]
Using that $D_{U_{1,i}}\subset U_{1,i}\oplus U^*_{1,i}$ is a Dirac structure for each $i=1,\dots,N$, it is easy to verify that
\[
D_{U_1} = D_{U_{1,1}}\times D_{U_{1,2}}\times \dots\times D_{U_{1,N}}\subset U_1\oplus U_1^*
\]
defines a Dirac structure on $U_1$. Finally we can also construct a vector bundle map $g_{U_2U_1}\colon  U_2\to U_1$ as follows:
\[
g_{U_2U_1} = g_{U_{2,1}U_{1,1}}\times g_{U_{2,2}U_{1,2}}\times \dots\times g_{U_{2,N}U_{1,N}}.
\]
With these notations, we define the \emph{product \textsc{ofio}-structure of the family $\{A_i\}_{i}$}, denoted $A_1 \times A_2\times\dots\times A_N$, to be the \textsc{ofio}-structure given by:
\begin{equation*}
A_1 \times A_2\times\dots\times A_N=(\pi_{(U_1,M)},\pi_{(U_2,M)},D_{U_1},U_2 \oplus U_2^*,g_{U_2U_1}).
\end{equation*}
\begin{definition}\label{def:ofio-interconnection}\normalfont With the notations above: Given a Dirac structure $D_{U_2}$ on $U_2$, the \emph{interconnection of the \textsc{ofio}-structures $A_1,\dots A_N$ by $D_{U_2}$} is the \textsc{fio}-structure
\begin{equation}\label{eq:interconnection_ofio}
(\pi_{(U_1,M)},\pi_{(U_2,M)},D_{U_1},D_{U_2},g_{U_2U_1}).
\end{equation}
\end{definition}
The case of greatest interest occurs when each $A_i$ has associated a dynamical system to be interconnected. In this case, we have $U_{1,i}=TM_i$, and there are energy functions $E_i\colon M_i\to \mathbb{R}$. Then we identify $U_1=TM$, and we can consider the dynamics given by the total energy $E=E_1+\dots+E_N$ on $M$ (here it is understood that each energy $E_i$ is pulled back to $M$ via the projection $M\to M_i$).

\begin{example} Take $N=1$, that is we want to interconnect a single \textsc{ofio}-system. Consider the system in Example~\ref{ex:VdS_open}:
\begin{align*}
\dot x &= J(x)\frac{\partial E}{\partial x}(x)+g(x)f,\\
e &= g^T(x)\frac{\partial E}{\partial x}(x).
\end{align*}
In our language, the flows $f$ and efforts $e$ are such that  $(f,e)\in U_2\oplus U^*_2$. To interconnect the system, we choose a Dirac structure $D\subset U_2\oplus U^*_2$. One choice is to set the efforts to zero, namely to consider $D=U_2\oplus \{0\}$, and then the resulting dynamics is precisely that of Example~\ref{ex:VdS_closed}. We will say that the ports have been ``interconnected'' or ``closed''. We point out again that Dirac structures $D\subset U_2\oplus U^*_2$ represent \emph{power-conserving} interconnections (see~\citep{VdSM2,DVdS}).
\end{example}
\vspace{.3cm}
\begin{example} Consider now a family of $N$ systems as in~Example~\ref{ex:VdS_open}, each of them with flows and efforts $(f_i,e_i)\in U_{2,i}\oplus U^*_{2,i}$, $i=1,\dots,N$. To interconnect them one uses a chosen Dirac structure on $U_2=U_{2,1}\times\dots\times U_{2,N}$ as in~\eqref{eq:interconnection_ofio}. This is the geometric version of Proposition~2.2 in~\citep{DVdS} (page 59) within the framework of \textsc{ofio}-systems.

\end{example}
\vspace{.3cm}

\paragraph{Interconnection of an \textsc{obio}-systems.} The interconnection of \textsc{obio}-systems is completely analogous (in fact, is essentially equivalent, see Appendix~\ref{ap:B}) to the forward case. Let
\begin{equation*}
A_i=(\pi_{(U_{1,i},M_i)},\,\pi_{(U_{2,i},M_i)},\,D_{U_{1,i}},\,U_{2,i} \oplus U_{2,i}^*,\,p_{U_{1,i}U_{2,i}}),\quad i = 1,\dots,N,
\end{equation*}
be a family of \textsc{obio}-structures. With the same notations as in the forward case, we consider the manifold $M$, the vector bundles $U_1\to M$ and $U_2\to M$, and the Dirac structure $D_{U_1}$. The map $p_{U_1U_2}\colon U_1\to U_2$ is defined by
\[
p_{U_1U_2} = p_{U_{1,1}U_{2,1}}\times p_{U_{1,2}U_{2,2}}\times \dots\times p_{U_{1,N}U_{2,N}}.
\]
With these notations, we define the \emph{product structure of the family $\{A_i\}_{i}$}, denoted $A_1 \times A_2\times\dots\times A_N$, to be the \textsc{obio}-structure given by:
\begin{equation*}
A_1 \times A_2\times\dots\times A_N=(\pi_{(U_1,M)},\pi_{(U_2,M)},D_{U_1},U_2 \oplus U_2^*,p_{U_1U_2}).
\end{equation*}
\begin{definition}\normalfont With the notations above: Given a Dirac structure $D_{U_2}$ on $U_2$, the \emph{interconnection of the \textsc{obio}-structure $A_1,\dots A_N$ by $D_{U_2}$} is the \textsc{bio}-structure
\begin{equation*}
(\pi_{(U_1,M)},\,\pi_{(U_2,M)},\,D_{U_1},\,D_{U_2},\,p_{U_1U_2}).
\end{equation*}
\end{definition}
When each $A_i$ has dynamics given by the energy function $E_i$, then on $U_1=TM$  we consider the dynamics given by the total energy $E=E_1+\dots+E_N$ on $M$.

\begin{example}\label{ex:LCcir0} Consider the following LC circuit, with two inductors ($L_1$ and $L_3$) and two capacitors ($C_2$ and $C_4$):
\begin{figure}[H]
\centering
\includegraphics{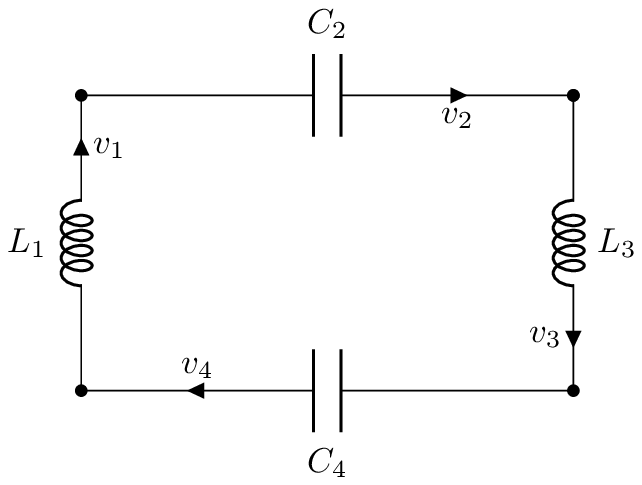}
\caption{The circuit of Example~\ref{ex:LCcir0}}\label{dia:cir0}
\end{figure}

To obtain the Dirac system associated to the circuit we will follow the formalism on~\citep{CEF_DiracConstraints} which provides equations in the tangent bundle of $M=TQ\oplus T^*Q$.

We label the branches according to the numbering of inductors and capacitors: for instance the branch ``$3$'' is the branch with the inductor $L_3$. The configuration space is a vector space $Q$ (the charge space), and an element $q=(q_1,q_2,q_3,q_4)\in Q$ represents charges in the branches $1$, $2$, $3$ and $4$ respectively. The Lagrangian on $TQ\simeq Q\times Q$ is given by:
\[
L(q,v)=\frac{1}{2}L_1v_1^2+\frac{1}{2}L_3v_3^2-\frac{1}{2}\frac{q_2^2}{C_2}-\frac{1}{2}\frac{q_4^2}{C_4},
\]
where $v_i$ ($i=1,2,3,4$) represents the currents in each branch. The sign convention for the currents in each branch is described in Diagram~\ref{dia:cir0}. The energy on $M=TQ\oplus T^*Q\simeq Q\times Q\times Q$ reads:
\[
E(q,v,p)=pv-L(q,v).
\]
The KCL (Kirchhoff's Current Law) gives raise to the distribution $\Delta\subset TQ\simeq Q\times Q$ which is independent of $q$. We have $\Delta=Q\times \Delta_q$ with
\[
\Delta_q=\{v\in T_qQ\st v_1-v_4=0, v_3-v_2=0, v_4-v_3=0\}.
\]
Its annihilator $\Delta_q^\circ\subset T^*_qQ$, representing KCL (Kirchhoff's Voltage Law), is then
\[
\Delta_q^\circ=\{p\in T_q^*Q\st p_1=p_2=p_3=p_4\}.
\]
Consider the distribution in $M$ given by $\Delta_M=T\bar \tau^{-1}(\Delta)$, where $\bar \tau:M\to Q$ is the projection. It is easy to see that $\Delta_M=\{(q,v,p,\dot q,\dot v,\dot p)\st \dot q\in\Delta\}$, i.e. we have the invariant distribution (independent of $x\in M$)
\[
\Delta_M=\{(q,v,p,\dot q,\dot v,\dot p)\st \dot q_1-\dot q_4=0, \dot q_3-\dot q_2=0, \dot q_4-\dot q_3=0 \}.
\]
Let $D_1\subset TM\oplus T^*M$ be the Dirac structure determined by the pullback of $\omega_Q$ to $M$ acting on the constraint distribution $\Delta_M$ (see Theorem~\ref{Thm:DiracStructM}). It is not hard to check that $D_1$ has the following description:
\[
D_1=\left\{(q,v,p,\dot q,\dot v,\dot p,\alpha,\gamma,\beta)\st \dot q\in\Delta, \dot p +\alpha\in \Delta^\circ,\, \gamma=0,\, \dot q-\beta=0\right\}.
\]
The equations of the circuit are then given by the Dirac system on $M$ with Dirac structure $D_1$ and energy function $E$,
\[
(x,\dot x)\oplus {\rm d}E\in D_1,
\]
where $x=(q,v,p)$. It follows that the equations of motion are:
\[
\dot q\in\Delta,\qquad \dot q=v,\qquad p=\frac{\partial L}{\partial v},\qquad \dot p-\frac{\partial L}{\partial q}\in\Delta^\circ.
\]

Assume now that we want to attach 2 ports to the circuit as indicated in Diagram~\ref{dia:cir1} (left). To model this open system, we define the map $p_A:TM\to M\times\R^2$ given by:
\[
p_A(q,v,p,\dot q,\dot v,\dot p)=(q,v,p,\dot q_3-\dot q_2,\dot q_1-\dot q_4).
\]
We then have the following \textsc{obio}-structure
\[
A=\left(TM,M\times \R^2,D_1,(M\times \R^2)\oplus (M\times \R^2)^*,p_A\right).
\]
Note that the dual map $(p_A)^*:(M\times\R^2)^*\simeq M\times\R^2\to T^*M$ is given by
\[
(p_A)^*(q,v,p,e_1,e_2)=(q,v,p,e_1(\alpha_3-\alpha_2)+e_2(\alpha_1-\alpha_4),0,0),
\]
(recall that $\alpha_i$ is the dual of $\dot q^i$).

\begin{figure}[H]
\centering
\includegraphics{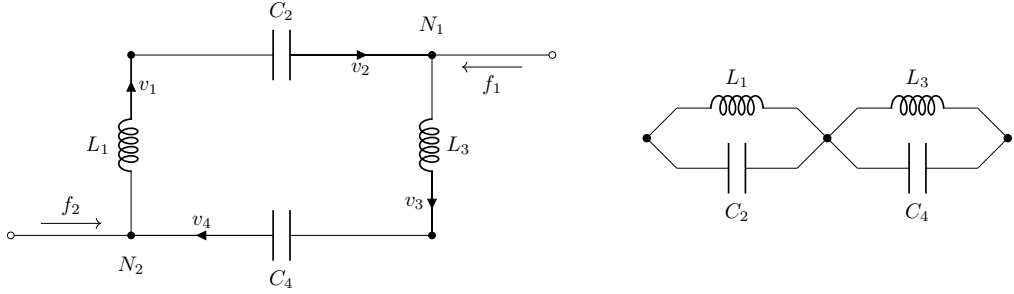}
\caption{The circuit before (left) and after (right) closing the ports}\label{dia:cir1}
\end{figure}

If we denote an element in $(M\times \R^2)\oplus (M\times \R^2)^*$ by $(x,f_1,f_2,e_1,e_2)$ (with $x=(q,v,p)\in M)$, the dynamics of $A$ have the form (see Section~\ref{subsec:BPDS})
\begin{align*}
((x, \dot{x}),{\rm d}E(x)-p_{A}^*(e_1,e_2)) & \in D_1, \\
p_{A}(x, \dot{x})& = (f_1,f_2),
\end{align*}
i.e.
\begin{align*}
\big((x, \dot{x}),{\rm d}E(x)-e_1(\alpha_3-\alpha_2)-e_2(\alpha_1-\alpha_4)\big) & \in D_1, \\
(\dot q_3-\dot q_2,\dot q_1-\dot q_4)& = (f_1,f_2).
\end{align*}

Let us finally show that closing the ports $(f_1,e_1)$ and $(f_2,e_2)$ corresponds to introducing a Dirac structure $D_2$ on the vector bundle $M\times \R^2\to M$ modeling the space of ports. Consider the (invariant) Dirac structure
\[
D_2=\{(q,v,p,f_1,e_1,f_2,e_2)\st f_1+f_2=0,e_1=e_2\}\subset (M\times \R^2)\oplus (M\times \R^2)^*,
\]
invariant in the sense that it does not depend on the base point. Closing the ports in the \textsc{obio}-structure $A$ gives the \textsc{bio}-structure
\[
A=\left(TM,M\times \R^2,D_1,D_2,p_A\right),
\]
which has equations of motion
\begin{align*}
((x, \dot{x}),{\rm d}E(x)-p_{A}^*(e_1,e_2)) & \in D_1, \\
((f_1,f_2), (e_1,e_2)) & \in D_2,\\
p_A(x, \dot{x})& = (f_1,f_2).
\end{align*}
The second equations tells us that the voltages $e_1$ and $e_2$ are the same at the nodes $N_1$ and $N_2$, and that the currents $f_1$ and $f_2$ are equal. This is depicted in Diagram~\ref{dia:cir1} (right).
\end{example}

\vspace{.3cm}

\begin{example}\label{ex:LCcir2} A similar example is given by a disconnected LC circuit with ports $(f,e)=(f_i,e_i)$, $i=1,\dots,4$, with two components as shown in Diagram~\ref{dia:cir2}.
\begin{figure}[H]
\centering
\includegraphics{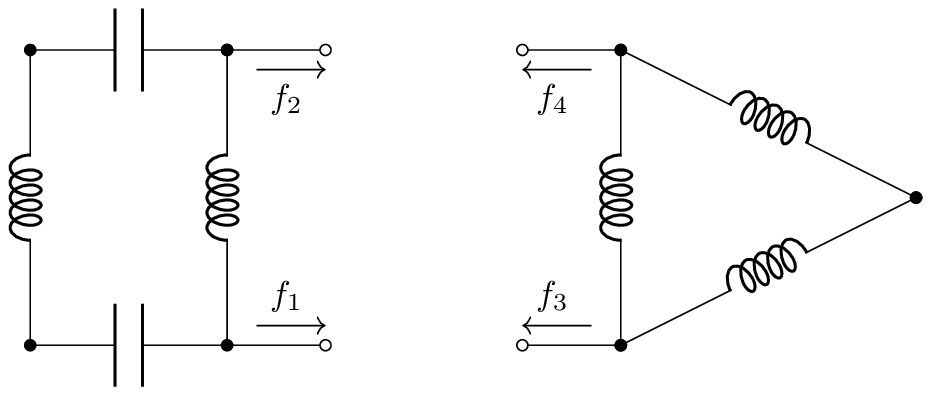}
\caption{The circuit of Example~\ref{ex:LCcir2}}\label{dia:cir2}
\end{figure}

The space of charges is a 7-dimensional vector space. One can proceed like in the previous example, and model it using an \textsc{obio}-structure of the form
\[
A=(TM,M\times\R^4,D_1,(M\times\R^4)\oplus (M\times\R^4)^*,p_A).
\]
Closing the ports with the Dirac structure $D_2\subset (M\times\R^4)\oplus (M\times\R^4)^*)$ given by
\[
D_2=\{(q,v,p,f_1,f_2,f_3,f_4,e_1,e_2,e_3,e_4)\st f_1=-f_3, f_2=-f_4, e_1=e_3,e_2=e_4\}
\]
leads to the following circuit:
\begin{figure}[H]
\centering
\includegraphics{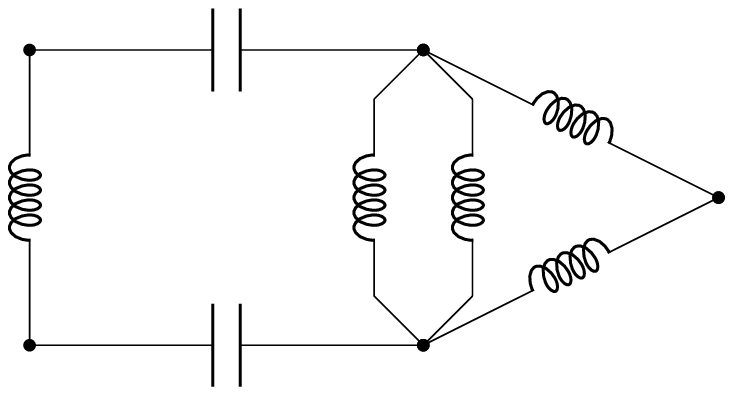}
\end{figure}
\end{example}
\vspace{.3cm}

\begin{example} Consider the system in Diagram~\ref{dia:pen2} (left). It  consists of two subsystems: a pendulum of mass $M$ (with a massless rod of length $\ell=1$) and a free particle of mass $m$, both subject to the gravitational field. The coordinate $\theta$ determines the position of the mass $M$ and the coordinates $x$ and $y$ determine the position of the mass $m$. The configuration space is $Q=S^1\times\R^2$, and the Hamiltonian of the system is:
\[
H=\frac{p_\theta^2}{2M}+\frac{(p_x^2+p_y^2)}{2m} - Mg\cos\theta - mgy,
\]
which is a function on $T^*Q=T^*(S^1\times\R^2)$. We denote by $D_1=\operatorname{graph}(\omega_Q)$ the standard Dirac structure on the tangent bundle $TT^*Q\to T^*Q$. 

\begin{figure}[H]
\centering
\includegraphics{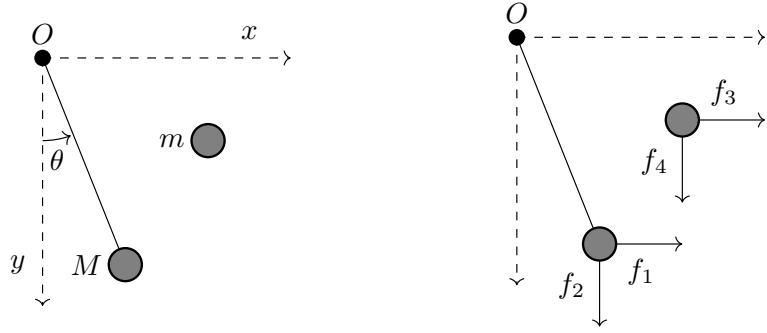}
\caption{Adding ports to a pendulum and a free mass}\label{dia:pen2}
\end{figure}

First, we add ports representing the velocities as shown in Diagram~~\ref{dia:pen2} (right), so that the resulting system is an~\textsc{obio}-structure. This is accomplished using the map $p_A: TT^*Q\to T^*Q\times \R^4$, fibered over the identity on $T^*Q$, given by
\[
p_A(\dot \theta,\dot x,\dot y,\dot p_\theta,\dot p_x,\dot p_y)=(\dot\theta\cos\theta,-\dot\theta\sin\theta,\dot x,\dot y),
\]
where we have omitted the base point in $T^*Q$ for simplicity.

The system with ports is described by the following~\textsc{obio}-structure:
\[
A=(TT^*Q,T^*Q\times\R^4,D_1,(T^*Q\times\R^4)\oplus (T^*Q\times\R^4)^*,p_A).
\]
The dual of the map $p_A$, $(p_A)^*:T^*Q\times (\R^4)^*\to T^*T^*Q$ is given by
\[
(p_A)^*(e_1,e_2,e_3,e_4)=(e_1\cos\theta-e_2\sin\theta,e_3,e_4,0,0,0),
\]
where again we have omitted the base point.

We close the ports with the Dirac structure $D_2\subset (T^*Q\times\R^4)\oplus (T^*Q\times\R^4)^*$ given by:
\[
D_2=\{(f_1,f_2,f_3,f_4)\oplus (e_1,e_2,e_3,e_4)\st f_1=f_3, f_2=f_4, e_1=-e_3, e_2=-e_4\}.
\]
Taking the energy function $E=H$, the dynamics of $A$ reads
\begin{align*}
((\alpha,\dot \alpha),{\rm d}E(\alpha)-(e_1\cos\theta-e_2\sin\theta,e_3,e_4,0,0,0)) & \in D_1, \\
((f_1,f_2,f_3,f_4), (e_1,e_2,e_3,e_4)) & \in D_2,\\
(\dot\theta\cos\theta,-\dot\theta\sin\theta,\dot x,\dot y)& = (f_1,f_2,f_3,f_4),
\end{align*}
with $\alpha=(\theta,x,y,p_\theta,p_x,p_y)\in T^*Q$. The first relation reads
\begin{align*}
\dot p_\theta&=-Mg\sin\theta+e_1\cos\theta - e_2\sin\theta,\qquad \dot p_x=-e_1,\qquad  \dot p_y=mg-e_2 ,\\
 \dot \theta&=p_\theta/M,\qquad  \dot x=p_x/m,\qquad  \dot y=p_y/m,
\end{align*}
which in particular leads to the following relation for the momenta:
\[
\dot p_\theta=-(M+m)g\sin\theta-\dot p_x\cos\theta +\dot p_y\sin\theta.
\]
The second relation implies 
\[
\dot\theta\cos\theta=\dot x, \quad -\dot\theta\sin\theta=\dot y, \quad e_1=-e_3,\quad e_2=-e_4.
\]
In particular, we have $x(t)=\sin(\theta(t))+c_1$ and $y(t)=\cos(\theta(t))+c_2$. Writing $\theta(0)=\theta_0$, $x(0)=x_0$ and $y(0)=y_0$ for the initial condition, we have
\[
 x(t)=\sin(\theta(t))-\sin\theta_0+x_0,\qquad y(t)=\cos(\theta(t))-\cos\theta_0+y_0.
\]
This means that the particle $m$ describes the motion of a pendulum with a massless rod of lenght $1$ pinned at the point $O'=(-\sin\theta_0+x_0,-\cos\theta_0+y_0)$, and which oscillates in phase with the pendulum of mass $M$.  Therefore, when an initial condition $(\theta_0,x_0,y_0)$ is chosen in such a way that $x_0=\sin(\theta_0)$ and $y_0=\cos(\theta_0)$, the masses $m$ and $M$ behave as a (single) pendulum of total mass $m+M$. In other words, for these specific initial conditions, both masses stick together.
\end{example}
\vspace{.3cm}

\begin{remark}\label{rem:1} As the examples suggest, the operation of interconnection of \textsc{ofio}-systems or \textsc{obio}-systems is closely related to that of composition of \textsc{ph}-systems. Actually, it is not hard to check that both operations are essentially the same. More precisely, if two different physical systems are described as either \textsc{ofio}-systems (or \textsc{obio}-systems) or as \textsc{ph}-systems, then their interconnection in the framework of \textsc{ofio}-systems (or \textsc{obio}-systems) coincides with their composition as \textsc{ph}-systems.
\end{remark}

\section{Future work}

This paper provides a new and intrinsic description of port-Hamiltonian systems using backward and forward operations of both Dirac and coistropic structures. We describe here the future research lines this work will lead to:

\begin{enumerate}[a)]
\item Starting from the notions introduced in Sections~\ref{Sec:PHS} and~\ref{sec:InputOutput}, we would like to find a categorical language to describe the relevant definitions and operations in the theory of port-Hamiltonian systems. We believe that the understanding of many constructions in the literature of port-Hamiltonian systems would benefit from such a categorical language. In Appendix~\ref{ap:B} we briefly discuss the equivalence between the forward and backward categories.

    Using a suitable definition of morphisms in the category as maps which preserve the relevant Dirac geometry, one might be able to cast many reduction results (such as those in~\citep{ReductionImplicit}) in a uniform geometric framework . This would include the symmetry reduction of the examples discussed in this paper and the relation with those discussed in~\citep{2012CenRaYo}.

\item The derivation of explicit solutions of Dirac systems is usually very difficult or even impossible and, therefore, it would be interesting to derive ad-hoc numerical methods to tackle this problem. In this sense, our paper clearly uncovers the underlying geometry of these Dirac systems and the interconnection of them from simpler pieces. We intend to study in a future paper the discrete version of the previous construction identifying a suitable notion of discrete Dirac structure, its relation with Morse functions and their interconnection. This discrete version could possible lead us to introduce new geometric integrators (see~\citep{MW_Acta,HLW_book}) discretizing the principles instead of the full differential-algebraic equations. Some steps in this direction have already appeared in~\citep{PaLe_17}.
\item Recently, there has been an increasing interest in modeling engineering and robotic systems which typically involve  a hybrid description of both continuous and discrete dynamics (see e.g.~\citep{vDsScgu_book}). In some cases, this is addressed employing a mixture of logic-based switching and difference/differential equations~\citep{Liberzon_book}. Many systems in engineering and some physical systems can be modeled using such a mathematical framework and it is natural to think that Dirac structures may be useful to model some classes of hybrid systems.

\end{enumerate}

 As a final remark we would like to mention the researh done on infinite dimensional Dirac structures for port-Hamiltonian systems involving partial differential equations in~\cite{vdSMaschkeDistributed} for the interested readers, the discrete couternpart of the so-called Dirac-Stoke structures is studied in~\cite{DiscreteBoundary}.

\appendix

\section{Isotropic, coisotropic and Dirac structures on vector spaces}\label{ap:A}

In this appendix we give a useful description of Dirac, isotropic and coisotropic structures on vector spaces.

Recall that the simplest cases of Dirac structures are those given by a form $\omega$ or a bivector $\Lambda$, namely the Dirac structures $D_\omega$ and $D_\Lambda$ given by the graph of $\omega^\flat\colon V\to V^*$ and $\sharp_\Lambda\colon V^*\to V$ respectively. The cases $\omega=0$ and $\Lambda=0$ yield the Dirac structures $D_\omega=V\oplus \{0\}$ and $D_\Lambda=\{0\}\oplus V^*$. If $\Sigma \subset V\oplus V^*$ is a subspace, we will use the notations $F_\Sigma$ and $F^{(*)}_\Sigma$ for the projections of $\Sigma$ on $V$ and $V^*$. In particular, for a Dirac structure $D$ we write $F_D$ and $F^{(*)}_D$.

Assume that $F\subset V$ is a subspace and $\omega_{F}$ is a 2-form on $F$. We can define the Dirac structure on $V$ given by
\[
D_{V,\omega_{F}}=\{(v,\alpha)\in V\oplus V^*\st v\in F, \; \omega_{F}(v,w)=\alpha(w) \; \mbox{for all } w\in F\},
\]
which should not be confused with $D_{\omega_F}$ which is a Dirac structure on $F$. If we choose a complement $F_1$ of $F$ so that $V=F\oplus F_1$, we represent an element $v\in V$ as $(v_0,v_1)$ and an element $\alpha\in V^*$ as $(\alpha_0,\alpha_1)\in F^*\oplus F_1^*$. The Dirac structure $D_{V,\omega_{F}}$ can then be represented as a direct sum of two Dirac structures on $F$ and $F_1$ as follows
\[
D_{V,\omega_{F}} = D_{\omega_F}\oplus\left(\{0\} \oplus F^*_1 \right)\subset \left(F\oplus F^*\right)\oplus \left(F_1\oplus F_1^*\right),
\]
where $\{0\} \oplus F^*_1$ is the Dirac structure on $F_1$ given by the bivector $\Lambda=0$. Note that for the Dirac structure $D_{V,\omega_{F}}$ one has $F_{D_{V,\omega_{F}}}=F$.

Conversely, let $D$ be a given Dirac structure on $V$. In $F_D$ we can define a presymplectic structure $\omega_{F_D}$ defined by the condition $\omega_{F_D} (v_1, v_2) = \alpha (v_2)$, for all $v_1, v_2 \in F_D$ and all $\alpha$ such that $(v_1, \alpha) \in D$. The proof that $\omega_{F_D}$ is a well defined form only makes use of the isotropy of $D$ which implies that if $\Sigma\subset V\oplus V^*$ is an isotropic subspace, one can also define a 2-form $\omega_{F_\Sigma}$ on $F_\Sigma$. Then applying the previous construction with $F = F_D$ and $\omega_F = \omega_{F_D}$ we recover the Dirac structure $D$, that is, $D_{V, \omega_{F_D}} = D$.

Using the representation $D_{V, \omega_F}$ of a given Dirac structure on $V$ one can describe the family of all isotropic structures $\Sigma$ on $V$ such that $F_{\Sigma} = F$. In fact, it is easy to check that $\Sigma$  must be of the form $\Sigma = D_{\omega_F} \oplus \{0\}\oplus F^{\circ_{F_1}}_2$
where $F_2$ represents an arbitrary subspace of $F_1$ and $F^{\circ_{F_1}}_2$ denotes the annihilator of $F_2$ in $F_1$, i.e. $F^{\circ_{F_1}}_2=\{\alpha\in F^*_1\st \alpha(v)=0,\, \mbox{for all } v\in F_2\}$. Therefore $F^{\circ_{F_1}}_2$ represents an arbitrary subspace of $F_1^*$. The latter subspace is maximal if $F_2 = \{0\}$ which gives $\Sigma = D_{V, \omega_F}$ and it is minimal if $F_2 = F_1$, which gives $\Sigma = D_{\omega_F} \oplus \{0\}\oplus \{0\}$.

Any coisotropic structure on $V$ can be described as the orthogonal complement of an isotropic structure $\Sigma = D_{\omega_F} \oplus \{0\}\oplus F^{\circ_{F_1}}_2$ as described above, which gives $\Sigma^\perp = D_{\omega_F} \oplus F_2 \oplus F_1^*$. The latter subspace is maximal if $F_2 = F_1$ which gives $\Sigma^\perp = D_{\omega_F} \oplus F_1\oplus F_1^*$
and it is minimal if $F_2 = \{0\}$ which gives $\Sigma^\perp = D_{\omega_F} \oplus \{0\}\oplus F_1^*=D_{V,\omega_F}$. Note that $F_{(\Sigma^\perp)} = F_{\Sigma} \oplus F_2$ which contains $F_{\Sigma}$ and it is equal to it if and only if $F_2 = \{0\}$ which, in turn, happens if and only if $\Sigma^\perp = \Sigma$, that is,
$\Sigma$ is a Dirac structure.

Furthermore, $F_2 \oplus F_1^\ast$ can be decomposed as
$F_2 \oplus F_1^\ast = F_2 \oplus F_2^\ast \oplus F_3^\ast$ where $F_3 \subset F_1$ is any subspace of $F_1$ such that $F_2 \oplus F_3 = F_1$, and we can conclude that $\Sigma^\perp$ can be decomposed as the direct sum of three structures corresponding to the decomposition
$V = F\oplus F_3 \oplus F_2$, namely
$\Sigma^\perp = D_{\omega_F} \oplus (\{0\}\oplus F_3^\ast) \oplus F_2 \oplus F_2^\ast$.
Note that $D_{F\oplus F_3, \omega_F} := D_{\omega_F} \oplus (\{0\}\oplus F_3^\ast)$ is a Dirac structure on
$F \oplus F_3$ and $F_2 \oplus F_2^\ast$ is the maximal coisotropic structure on $F_2$.

\section{Relation between the forward and backward categories}\label{ap:B}

There is a close relation between the forward and backward of Dirac structures which can be described in a precise way in the language of categories. We review the main ingredients for the case of vector spaces following~\citep{2012CenRaYo}. We refer the reader to that reference for a complete discussion.

A \emph{Dirac space} is a pair $(U,D_U)$, where $U$ is a vector space and $D_U$ is a Dirac structure on $U$. First, we define the category \texttt{Forw-DS}. Objects are Dirac spaces, and a morphism
\[
\varphi^{\F}\colon (U,D_U)\to (V,D_V)
\]
is a linear map $\varphi\colon U\to V$ such that $\F \varphi(D_U)=D_V$. The composition of morphisms is such that $\varphi^{\F}\circ \psi^{\F}=(\varphi\circ \psi)^{\F}$. For each Dirac space $(V,D_V)$ the identity morphism is $(1_{V})^\F$, where $1_V\colon V\to V$ is the identity. The category \texttt{Back-DS} is defined analogously: objects are \texttt{DS}, and a morphism
\[
\varphi^{\B}\colon (U,D_U)\to (V,D_V)
\]
is a map $\varphi\colon U\to V$ such that $\B \varphi(D_V)=D_U$. The composition of morphisms is such that $\varphi^{\B}\circ
\psi^{\B}=(\varphi\circ \psi)^{\B}$. For each Dirac space $(V,D_V)$ the identity morphism is $(1_{V})^\B$.

We will show that the categories \texttt{Forw-DS} and \texttt{Back-DS} are isomorphic under some natural identifications. More precisely, we assume the identification $U^{**}\equiv U$ for finite dimensional vector spaces and $f^{**}\equiv f$ for linear maps $f\colon U\to V$ between vector spaces. If $D_U\subset U\oplus U^*$ is a Dirac structure, the \emph{twist of $D$}, $\boldsymbol{t}(D_U)$, is the following Dirac structure on $U^*$:
\begin{equation*}\label{eq:twist}
\boldsymbol{t}(D_U) = \{\alpha\oplus u\in U^*\oplus U \st u\oplus\alpha \in D_U\}\subset U^*\oplus U.
\end{equation*}
The twist satisfies
\[
\boldsymbol{t}\left(\boldsymbol{t}(D_U)\right)=D_U.
\]

Using the twist, we define a contravariant functor $\delta\colon \texttt{Forw-DS}\to \texttt{Back-DS}$ as follows. For an object $(U,D_U)$ in $\texttt{Forw-DS}$ we set $\delta\left((U,D_U)\right)=(U^*,\boldsymbol{t}(D_U))$. For a morphism $f^\F$, we set $\delta(f^\F)=(f^*)^\B$, where $f^*$ is the dual of $f$. Likewise, we define a contravariant functor $\bar\delta\colon \texttt{Back-DS}\to \texttt{Forw-DS}$ defined as follows. For an object $(U,D_U)$ in $\texttt{Back-DS}$, we set $\bar\delta\left((U,D_U)\right)=(U^*,\boldsymbol{t}(D_U))$ and for a morphism $f^\B$, we set $\bar\delta(f^\B)=(f^*)^\F$, where $f^*$ is the dual of $f$. The fundamental result is the following:
\begin{proposition} The contravariant functors $\delta$ and $\bar\delta$ satisfy
\[
\bar \delta\circ\delta= 1_{\texttt{Forw-DS}},\qquad \delta\circ\bar\delta= 1_{\texttt{Back-DS}}.
\]
\end{proposition}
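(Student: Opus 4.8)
The plan is to verify the two functor identities directly on objects and on morphisms, using the involutivity of the twist and the compatibility of dualization with the forward/backward operations. Since $\delta$ and $\bar\delta$ are both contravariant, their composites $\bar\delta\circ\delta$ and $\delta\circ\bar\delta$ are covariant, so it makes sense to compare them with the identity functors. The main tool will be the relation $\boldsymbol{t}(\boldsymbol{t}(D_U))=D_U$ together with the identifications $U^{**}\equiv U$ and $f^{**}\equiv f$ that the excerpt has already fixed.

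First I would check the identity on objects. For an object $(U,D_U)$ of \texttt{Forw-DS}, applying $\delta$ gives $(U^*,\boldsymbol{t}(D_U))$, an object of \texttt{Back-DS}, and then applying $\bar\delta$ gives $\left(U^{**},\boldsymbol{t}(\boldsymbol{t}(D_U))\right)$. Under the identification $U^{**}\equiv U$ and using $\boldsymbol{t}(\boldsymbol{t}(D_U))=D_U$, this is exactly $(U,D_U)$, so $\bar\delta\circ\delta$ is the identity on objects; the computation for $\delta\circ\bar\delta$ is symmetric.

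Next I would check the identity on morphisms. Given a morphism $f^\F\colon (U,D_U)\to(V,D_V)$ in \texttt{Forw-DS}, by definition $\delta(f^\F)=(f^*)^\B$ and then $\bar\delta\!\left((f^*)^\B\right)=\left((f^*)^*\right)^\F=(f^{**})^\F$. Using the identification $f^{**}\equiv f$, this equals $f^\F$, which is precisely what $1_{\texttt{Forw-DS}}$ does to the morphism. The same argument with the roles of $\F$ and $\B$ exchanged gives $\delta\circ\bar\delta=1_{\texttt{Back-DS}}$ on morphisms.

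The main obstacle, and the step deserving the most care, is verifying that $\delta$ and $\bar\delta$ are genuinely well-defined functors before composing them—that is, that $\delta$ sends a forward morphism to a legitimate backward morphism. Concretely, one must confirm that if $\F f(D_U)=D_V$, then $\B(f^*)\!\left(\boldsymbol{t}(D_U)\right)=\boldsymbol{t}(D_V)$, which amounts to showing that the twist intertwines the forward operation along $f$ with the backward operation along $f^*$, namely $\boldsymbol{t}(\F f(D_U))=\B(f^*)(\boldsymbol{t}(D_U))$. This is the compatibility identity relating twist, dualization, and the push-forward/pull-back operations, and it follows by unwinding the defining descriptions of $\F f$, $\B(f^*)$, and $\boldsymbol{t}$ together with the identity $\langle\alpha,f v\rangle=\langle f^*\alpha,v\rangle$. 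Once this compatibility is in place, the two functor equations follow immediately from $\boldsymbol{t}\circ\boldsymbol{t}=\mathrm{id}$ and $f^{**}\equiv f$ as above, and the remaining verifications are the routine checks that identities and composition are preserved, which hold by construction of the categories.
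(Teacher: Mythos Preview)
The paper does not actually prove this proposition: it is stated without proof, with the surrounding text referring the reader to~\citep{2012CenRaYo} for a complete discussion. Your argument therefore supplies what the paper omits, and it is correct. The verification on objects via $\boldsymbol{t}\circ\boldsymbol{t}=\mathrm{id}$ and $U^{**}\equiv U$, and on morphisms via $f^{**}\equiv f$, is exactly the right computation, and you correctly identify that the only nontrivial point is the well-definedness of $\delta$ and $\bar\delta$ as functors, which reduces to the compatibility identity $\boldsymbol{t}(\F f(D_U))=\B(f^*)(\boldsymbol{t}(D_U))$; this follows, as you say, by unwinding the definitions together with $\langle\alpha,fv\rangle=\langle f^*\alpha,v\rangle$ and the identification $(f^*)^*\equiv f$.
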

\noindent Thus, under the identifications we are assuming between a vector space and its bidual, the categories \texttt{Forw-DS} and \texttt{Back-DS} are dually isomorphic.
\vspace*{.3cm}

This equivalence applies for instance to the case of \textsc{fio} and \textsc{bio}-structures as follows. Given a \textsc{fio}-structure
\[
A=\big(\pi_{(U_1,M)},\pi_{(U_2,M)},D_{U_1},D_{U_2},g_{U_2U_1}\big),
\]
the construction of $\Phi_A$ from $g_{U_2U_1}\colon U_2\to U_1$ gives rise to a dual construction, namely $g^*_{U_2U_1}\colon U_1^*\to U_2^*$ and $\Phi_A^*\colon U_1^*\to U_1^*\oplus U_2^*$. One can check that
\[
\Phi_A^*(u_1^*)=u_1^*+g^*_{U_2U_1}(u_1^*).
\]
We observe that this dual construction has the same formal property of the map $\Psi_B$ where $B$ is the following \textsc{bio}-structure:
\[
B=\big(\pi_{(U^*_1,M)},\pi_{(U^*_2,M)},\boldsymbol{t}(D_{U_1}),\boldsymbol{t}(D_{U_2}),p_{U_1U_2}=g^*_{U_2U_1}\big).
\]

\bibliography{References}

\end{document}